%% file: main.tex
\documentclass{article}
\usepackage{amsthm,amssymb}
\usepackage{MnSymbol}
\usepackage[margin=1in]{geometry}

\usepackage{graphicx}
\graphicspath{{figures/}}
\usepackage{xcolor}
\usepackage[ruled,linesnumbered,noend]{algorithm2e}
\usepackage{eqparbox}
\usepackage{float}
\usepackage{parskip}
\usepackage{mathtools}
\usepackage{enumitem}
\usepackage{booktabs}
\usepackage{subcaption}
\definecolor{linkblue}{HTML}{4A86B8}
\usepackage[colorlinks,linkcolor=linkblue,citecolor=linkblue,urlcolor=linkblue]{hyperref}
\usepackage[longnamesfirst]{natbib}
\usepackage{complexity}
\usepackage{tikz}
\usetikzlibrary{arrows.meta}

\input{commands}

\usepackage{aliascnt}
\usepackage[capitalize,nameinlink]{cleveref}
\usepackage{autonum}

\theoremstyle{plain}
\newtheorem{maintheorem}{Main Theorem}
\crefname{maintheorem}{main theorem}{main theorems}
\Crefname{maintheorem}{Main Theorem}{Main Theorems}
\newtheorem{theorem}{Theorem}[section]
\newaliascnt{lemma}{theorem}
\newtheorem{lemma}[lemma]{Lemma}
\aliascntresetthe{lemma}
\crefname{lemma}{lemma}{lemmas}
\Crefname{lemma}{Lemma}{Lemmas}
\newaliascnt{proposition}{theorem}
\newtheorem{proposition}[proposition]{Proposition}
\aliascntresetthe{proposition}
\crefname{proposition}{proposition}{propositions}
\Crefname{proposition}{Proposition}{Propositions}
\newaliascnt{corollary}{theorem}

\aliascntresetthe{corollary}
\crefname{corollary}{corollary}{corollaries}
\Crefname{corollary}{Corollary}{Corollaries}
\newaliascnt{claim}{theorem}
\newtheorem{claim}[claim]{Claim}
\aliascntresetthe{claim}
\crefname{claim}{claim}{claims}
\Crefname{claim}{Claim}{Claims}
\theoremstyle{definition}
\newaliascnt{definition}{theorem}
\newtheorem{definition}[definition]{Definition}
\aliascntresetthe{definition}
\crefname{definition}{definition}{definitions}
\Crefname{definition}{Definition}{Definitions}
\newaliascnt{assumption}{theorem}
\newtheorem{assumption}[assumption]{Assumption}
\aliascntresetthe{assumption}
\crefname{assumption}{assumption}{assumptions}
\Crefname{assumption}{Assumption}{Assumptions}
\newaliascnt{remark}{theorem}
\newtheorem{remark}[remark]{Remark}
\aliascntresetthe{remark}
\crefname{remark}{remark}{remarks}
\Crefname{remark}{Remark}{Remarks}
\newaliascnt{example}{theorem}

\aliascntresetthe{example}
\crefname{example}{example}{examples}
\Crefname{example}{Example}{Examples}

\usepackage{thm-restate}

\title{Fixed Sets and the Hardness of Computing Correlated Equilibria\thanks{This paper first appeared online on October 20, 2026.}}
\author{%
    Ioannis Anagnostides\thanks{\protect\label{aff:cmu}Carnegie Mellon University. \texttt{ianagnos@cs.cmu.edu}, \texttt{sandholm@cs.cmu.edu}.}
    \and Constantinos Daskalakis\thanks{\protect\label{aff:mit}Massachusetts Institute of Technology. \texttt{costis@mit.edu}, \texttt{gfarina@mit.edu}, \texttt{zhangbh@mit.edu}.}
    \and Gabriele Farina\textsuperscript{\ref*{aff:mit}}
    \and Noah Golowich\thanks{University of Texas, Austin. \texttt{nzg@cs.utexas.edu}.}
    \and Tuomas Sandholm\textsuperscript{\ref*{aff:cmu},}\thanks{Additional affiliations: Strategy Robot, Inc., Strategic Machine, Inc., Optimized Markets, Inc.}
    \and Brian Hu Zhang\textsuperscript{\ref*{aff:mit}}
}
\date{}

\begin{document}
\maketitle
\begin{abstract}
    We show that computing an approximate correlated equilibrium of a polytope game is \textsf{PPAD}-complete, and thus polynomial-time equivalent to finding a Nash equilibrium. This challenges the conventional view in game theory that correlated equilibria are more tractable objects than Nash equilibria.
    
    From a conceptual standpoint, we connect correlated equilibria to a natural generalization of fixed points, which we call \emph{fixed sets}. A fixed set of a continuous function $f : \mathcal{X} \to \mathcal{X}$ is a nonempty finite set $S \subseteq \mathcal{X}$ such that $S \subseteq \operatorname{conv} f(S)$. Our main technical contribution is to prove that computing a fixed set is \PPAD-hard even at inverse polynomial precision.
\end{abstract}

\input{text/introduction.tex}

\input{text/preliminaries.tex}

\input{text/fixed-sets.tex}

\input{text/amplification.tex}

\input{text/multilinear-games.tex}

\input{text/ai-disclosure.tex}

\section*{Acknowledgments}

I.A. and T.S. are supported by NIH award A240108S001, the Vannevar Bush Faculty Fellowship ONR N00014-23-1-2876, and National Science Foundation grant RI-2312342. I.A. thanks Aviad Rubinstein for a discussion concerning fixed sets. C.D. was supported by a Simons Investigator Award, a Simons Collaboration on Algorithmic Fairness, ONR MURI grant N00014-25-1-2116, ONR grant N00014-25-1-2296. G.F. was supported in part by the National Science Foundation award CCF-2443068, the Office of Naval Research grant N000142512296, and an AI2050 Early Career Fellowship.

\bibliographystyle{plainnat}
\bibliography{dairefs}

\end{document}

%% file: commands.tex
\usepackage{bm}
\usepackage{bbm}
\usepackage{xspace}
\usepackage{complexity}
\usepackage{amsthm}

\usepackage{physics2}
\usephysicsmodule{qtext.legacy,ab}
\let\mid\mvert
\usepackage{mleftright}
\mleftright

\newcommand{\delimit}[3]{
    \NewDocumentCommand#1{sO{}m}{
    \IfBooleanTF##1
    {\left#2 ##3 \right#3}
    {\mathopen{##2 #2} ##3 \mathclose{##2 #3}}%
    }
}

\delimit\ceil\lceil\rceil
\delimit\floor\lfloor\rfloor
\delimit\ip\langle\rangle
\delimit\norm\lVert\rVert
\delimit\abs\lvert\rvert

\def\va{{\bm{a}}}

\def\ve{{\bm{e}}}

\def\vs{{\bm{s}}}
\def\vt{{\bm{t}}}

\def\vv{{\bm{v}}}
\def\vw{{\bm{w}}}
\def\vx{{\bm{x}}}
\def\vy{{\bm{y}}}
\def\vz{{\bm{z}}}

\renewcommand\vec\bm

\def\mI{{\mathbf{I}}}

\def\mK{{\mathbf{K}}}

\def\mW{{\mathbf{W}}}

\def\cS{{\mathcal{S}}}

\def\cX{{\mathcal{X}}}

\let\E\relax
\DeclareMathOperator*{\E}{\mathbb E}

\DeclareMathOperator*{\clip}{clip}

\renewcommand{\R}{\mathbb R}

\newcommand{\Z}{\mathbb Z}

\let\op\operatorname
\let\eps\epsilon

\newcommand{\supp}{\op{supp}}

\newcommand{\defeq}{\coloneqq}
\newcommand{\eqdef}{\eqqcolon}

\newcommand{\eg}{{\em e.g.}\xspace}

\newcommand{\Geo}{\op{Geo}}

\newcommand{\KL}{d_\text{KL}}

%% file: text/introduction.tex
\section{Introduction}

The \emph{correlated equilibrium (CE)}, introduced by~\citet{Aumann74:Subjectivity}, is a central solution concept in economics. It is a generalization of \emph{Nash equilibrium}~\citep{Nash50:Equilibrium}, enabling correlation between the players' strategies, and is in many ways more natural than its predecessor: in the words of Roger Myerson, ``if there is intelligent life on other planets, in a majority of them, they would have discovered correlated equilibrium before Nash equilibrium''~\citep[p.~83]{Shoham09:Multiagent}. Any student of game theory learns early that a correlated equilibrium can be obtained as a solution to a linear program. For an explicitly represented game in normal form, this yields a polynomial-time algorithm. Taking a step further, a celebrated result by~\citet{Papadimitriou08:Computing} extended this tractability frontier even in multi-player \emph{succinct} games, in which case the linear program describing the set of correlated equilibria contains exponentially many variables. Their algorithm---aptly named ``ellipsoid against hope''---assumes a natural condition they refer to as the \emph{polynomial expectation property}, which requires that expected utilities under product distributions can be computed in polynomial time. Another crucial premise in their algorithm is that the game is of \emph{polynomial type}, meaning that the number of pure strategies available to each player is polynomial in the description. However, this is rarely the case in realistic strategic interactions, where players have complex strategy spaces. The complexity of computing correlated equilibria beyond games of polynomial type has remained an enigmatic question, despite the fact that correlated equilibria have played a central role in the development of game theory for more than 50 years.

In stark contrast to correlated equilibria, which can be phrased as solutions to linear programs and thereby form a convex set, Nash equilibria are topologically more complex objects~\citep{Datta03:Universality,Balkenborg19:On}. Indeed, Nash's famous proof of existence hinges on a \emph{fixed-point} argument, and in fact, fixed points are computationally equivalent to Nash equilibria~\citep{Daskalakis08:Complexity,Chen09:Settling,Etessami07:Complexity}. These celebrated results have reaffirmed the view that correlated equilibria are computationally more tractable objects than Nash equilibria. As discussed, this is certainly the case in games of polynomial type; but is this viewpoint justified beyond that?

\subsection{Our results}

To cast light on this fundamental question, one of our key conceptual contributions is to connect correlated equilibria to a natural relaxation of fixed points, which we call \emph{fixed sets}. A fixed set of $f : \cX \to \cX$ is a nonempty finite set $S \subseteq \cX$ such that $S \subseteq \operatorname{conv} f(S)$ (\Cref{fig:fixed-sets}, right). A fixed set supported on a \emph{single point} is nothing other than a fixed point (\Cref{fig:fixed-sets}, left). Another special case is a \emph{cycle} or periodic orbit under $f$, whereby $S \subseteq f(S)$ (\Cref{fig:fixed-sets}, center). At the heart of our approach is the use of fixed sets as a stepping stone toward characterizing the complexity of correlated equilibria. (\Cref{def:fixed set} gives the approximate version used in our computational results.)

\begin{figure}[htbp]
\centering
\input{tikz/fixed-sets.tex}
\caption{A fixed point, a periodic-3 orbit, and a fixed set supported on 3 points.}
\label{fig:fixed-sets}
\end{figure}

Our main technical result establishes that computing a fixed set is as hard as computing a fixed point.

\begin{restatable}[Hardness of fixed sets]{maintheorem}{fixedsethardness}
    \label{th:fixed set hardness}
    For every fixed $C\ge1$, there is a constant $c>0$ such that the following problem is \PPAD-hard: given a Lipschitz function $f:[0,1]^d\to[0,1]^d$ that can be evaluated in polynomial time, compute a $d^{-c}$-fixed set $S$ of $f$ with $|S|\le d^C$.
\end{restatable}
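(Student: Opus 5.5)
The plan is to reduce from the \PPAD-hardness of computing an $\epsilon$-approximate fixed point (Brouwer) of a Lipschitz map $g:[0,1]^n\to[0,1]^n$ at inverse-polynomial precision, via an \emph{amplification} construction. From $g$ we build a map $f$ on a larger cube $[0,1]^d$ with $d=\mathrm{poly}(n)$. Any small fixed set of $f$ should then be decodable into an approximate fixed point of $g$. The central idea is that a fixed set $S$ with $S\subseteq \operatorname{conv} f(S)$ fails to be a fixed point only through ``averaging''. We therefore design $f$ so that convex combinations of its outputs are forced to nearly coincide. Concretely, $f$ should be \emph{contracting toward a single point in all but a few directions}, so that the convex hull of $f(S)$ is thin. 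If $f(S)$ lies in a small-diameter set, then $S$ itself lies there too. A Lipschitz argument then shows every $x\in S$ is an approximate fixed point of $f$, and hence of $g$ after decoding.

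First, I reduce to the one-point case: if $\operatorname{diam} f(S)\le\delta$, then each $x\in S$ lies within $\delta$ of $f(x)$, so $x$ is a $\delta$-fixed point. Second, I handle diversity. A fixed set may spread its points so that $f(S)$ is spread as well, for instance a periodic orbit. To kill this, I embed $g$ into a high-dimensional encoding: $x\in[0,1]^d$ is read as $k=d/n$ noisy copies of a point $y\in[0,1]^n$. The map $f$ then (i) decodes a robust summary $\bar y(x)$, for example a coordinatewise average or median of the copies, and (ii) outputs all copies equal to $g(\bar y(x))$, moved only a small step $\eta$ from $x$, i.e.\ $f(x)=(1-\eta)x+\eta\,\mathrm{enc}(g(\bar y(x)))$. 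With this damped form, the condition $S\subseteq\operatorname{conv} f(S)$ turns into $0\in\operatorname{conv}\{\mathrm{enc}(g(\bar y(x)))-x : x\in S\}$ plus correction terms. So the displacement vectors must positively span the origin.

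Third, I use the high dimension to argue that with $|S|\le d^C$ points, the displacement vectors cannot cancel unless they are individually small, up to $d^{-c}$. The idea is to use many independent ``gadgets'': $f$ runs $g$ on $k\gg d^{C}$ copies with different fixed random-looking linear encodings (or permutations/rotations). Cancellation requires the weighted average of displacements to vanish simultaneously in every gadget. A single point set of size $d^C$ has too few degrees of freedom to exploit cancellations in all gadgets, so by pigeonhole or averaging some point in the support of the convex weights must be nearly fixed in some gadget. I expect this cancellation-impossibility step to be the main obstacle. A cycle of $g$ such as a rotation gives genuine convex cancellation without any fixed points, so the construction must break such cycles. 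What I would try first is to compose $g$ with a ``coordinate-hashing'' that makes the map locally a contraction toward a point in almost all coordinates. I would then apply a potential-function argument: take $\Phi(S)=\max_{x\in S}\langle w,x\rangle$ over a random direction $w$, and note that $S\subseteq\operatorname{conv} f(S)$ forces $\max_{x\in S}\langle w, f(x)-x\rangle\ge 0$ for every $w$. Taking $w$ along many directions and using the contraction shows the decoded points cluster, with diameter $\lesssim d^{-c}$.

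Finally, I set the parameters. Choose $k$ and the step $\eta$ as polynomials in $n$ depending on $C$, so that the leftover error is $d^{-c}$. I then verify that $f$ is Lipschitz and poly-time computable, and that any point of the output fixed set decodes, via $\bar y$, to an $\epsilon$-fixed point of $g$. This completes the \PPAD-hardness reduction.
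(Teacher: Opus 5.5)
Your proposal has a genuine gap at exactly the step you flag as the main obstacle, and the construction as designed cannot close it. You use a single damping step $\eta$ and copies related by fixed affine encodings (repetition, permutations, rotations, linear maps) of the \emph{same} map $g$. Such gadgets add no constraint at all. Suppose $\{\vy_1,\dots,\vy_k\}$ is an $\eps$-fixed set of $g_\eta(\vy)=(1-\eta)\vy+\eta\, g(\vy)$ with transition matrix $P$. Because $\bar{\vy}(\mathrm{enc}(\vy))=\vy$ and $\mathrm{enc}$ commutes with convex combinations, $\{\mathrm{enc}(\vy_1),\dots,\mathrm{enc}(\vy_k)\}$ is a fixed set of your $f$. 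It has the same $P$, the same support size, and the same error up to the norm of the encoding. So no comparison of ``degrees of freedom'' with the number of gadgets can force non-cancellation.

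The tool you propose for this step is also too weak. The condition $\max_{\vx\in S}\ip{\vw,f(\vx)-\vx}\ge0$ for all $\vw$ is equivalent to $0\in\operatorname{conv}\{f(\vx)-\vx:\vx\in S\}$. That just says some distribution on $S$ is an expected fixed point. Approximate expected fixed points are computable in polynomial time, so no argument that uses only this property can yield \textsf{PPAD}-hardness.

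Finally, contraction in the coordinates orthogonal to the encoding does not make $f(S)$ thin in the directions that carry $g$. So your first step (small $\operatorname{diam} f(S)$) is never triggered. Making $f$ contract in those directions too would turn the instance into a contraction. Its fixed sets then concentrate at the unique fixed point, and its fixed-point problem is not known to be \textsf{PPAD}-hard.

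The missing idea is to make the gadgets genuinely \emph{different} maps that are forced to share one transition matrix. The paper takes the product $g=(f_{\alpha_1},\dots,f_{\alpha_r})$ of damped copies $f_\alpha(\vx)=(1-\alpha)\vx+\alpha f(\vx)$ at geometric scales $\alpha_\ell=b^{-\ell}$. A fixed set at one scale no longer lifts diagonally, and any $\eps$-fixed set of $g$ yields fixed sets of every $f_{\alpha_\ell}$ with a common $P$.

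The quantitative engine is the entropy $h(t)=\E_{i\sim\mu}H(P_i(t))$ of the resolvents $P(t)=t(\mI-(1-t)P)^{-1}$. This entropy is nonincreasing and lies in $[0,\log k]$, so with $t_\ell=\delta\alpha_\ell/64$, some layer satisfies $h(t_\ell)-h(t_{\ell-1})\le\log k/r$. The argument then proceeds in three steps:
\begin{enumerate}
    \item Pinsker turns this entropy gap into a bound on the conditional variance $V(t_\ell)$.
    \item Dividing the stationarity identity by $\alpha_\ell$ makes the $P(t_\ell)$-averaged displacement small.
    \item Lipschitzness of $f$ combines the two into $\E_{i\sim\mu}\norm{f(\vx_i^{(\ell)})-\vx_i^{(\ell)}}_2<\delta/2$.
\end{enumerate}
The logarithmic dependence on $k$ is what lets $r=O(\log K)$ layers suffice.

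Note also that this argument costs a factor $b^r$ in precision: $\eps=\delta/(16rb^r)$ with $b\approx\delta^{-3}$ and $r\approx\delta^{-2}\log K$. So the source must be Brouwer hardness at \emph{constant} precision \citep{Rubinstein16:Settling}. Starting, as you propose, from inverse-polynomial precision would only give exponentially small $\eps$.
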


It is worth pointing out that this hardness result, and subsequent implications thereof, extend---under randomized reductions---to distributions with superpolynomial support of size at most $K$, provided the output admits polynomial-time sampling and $\log K$ is polynomial in the input size (\Cref{rem:sampling representations}). Also, the tradeoff between precision and support in~\Cref{th:fixed set hardness} is essential: there is a simple FPTAS for computing an $\eps$-fixed set based on iterating $f$, supported on $O(1/\eps)$ points (\Cref{rem:fixed-set-fptas}).

As alluded to, the connection we make between fixed sets and correlated equilibria yields the following direct consequence for quadratic concave games.\footnote{Sometimes the term ``normal-form correlated equilibrium'' is used to distinguish from other correlated equilibrium concepts, but we will not do so here.}

\begin{maintheorem}[CE hardness in concave games]\label{cor:concave hardness}
    For every fixed $C\ge1$, there is a constant $c>0$ such that computing a $d^{-c}$-correlated equilibrium supported on at most $d^C$ strategy profiles is \PPAD-hard, even in two-player concave games with $\cX_1 = \cX_2 = [0, 1]^d$ where each utility is quadratic in the player's own strategy.
\end{maintheorem}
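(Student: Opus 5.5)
The plan is to reduce from \Cref{th:fixed set hardness} via an imitation game. Given a Lipschitz $f:[0,1]^d\to[0,1]^d$, build a two-player concave game on $\cX_1=\cX_2=[0,1]^d$. Player~1 (the imitator) has utility $u_1(\vx,\vy) = -\norm{\vx-\vy}_2^2$, and player~2 has utility $u_2(\vx,\vy) = -\norm{\vy - f(\vx)}_2^2$. Both utilities are quadratic and concave in the player's own strategy. Each can be evaluated in polynomial time whenever $f$ can, so this is a valid instance of the class in the statement.

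Let $\mu$ be an $\eps$-CE supported on at most $d^C$ profiles $(\vx,\vy)$. We use the deviation notion for concave games: each player may apply any swap map $\phi$ from recommendation to action, where $\phi$ ranges over arbitrary functions on the finite support, or at least over constant-per-recommendation maps.

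\textbf{Player 2's constraint.} For each recommendation $\vy$ in the support, the best deviation is the conditional mean $\bar f(\vy)\defeq\E_\mu[f(\vx)\mid \vy]$. The resulting gain is the conditional expectation of $\norm{\vy-f(\vx)}^2-\norm{\bar f(\vy)-f(\vx)}^2 = \norm{\vy-\bar f(\vy)}^2$. The $\eps$-CE condition therefore gives
\[
\E_\mu \norm{\vy - \bar f(\vy)}^2\le\eps .
\]
So each $\vy$ is, on average, $\sqrt\eps$-close to a point of $\operatorname{conv} f(S_x)$, where $S_x$ is the set of $\vx$-marginal support points.

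\textbf{Player 1's constraint.} Symmetrically, player~1's constraint gives $\E_\mu\norm{\vx-\bar y(\vx)}^2\le\eps$, where $\bar y(\vx)=\E_\mu[\vy\mid\vx]$. Combining the two constraints, each $\vx$ is close to a convex combination of $\vy$'s. Each such $\vy$ is close to a convex combination of $f(\vx')$, with $\vx'$ ranging over the support. Because convex hulls compose, $\vx$ is close to $\operatorname{conv} f(S_x)$.

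\textbf{Main obstacle.} The bounds above hold only on average under $\mu$, whereas \Cref{def:fixed set} presumably needs a guarantee for every point of $S$, or a weighted one. Low-probability support points could violate the condition badly. I would address this in one of two ways. The first is to check whether the approximate fixed-set definition is itself distributional, i.e.\ a weighted average error; I expect it is, given the remark on sampling representations. In that case the averaged bound, via Jensen and the triangle inequality with the errors propagated through the conditioning, yields an $O(\sqrt\eps)$-fixed set with $|S|\le d^C$. The second is to prune support points of small mass and renormalize, which costs a polynomial factor in precision.

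\textbf{Parameter choice.} The constant $c$ in the statement is then taken as half of the constant from \Cref{th:fixed set hardness}, adjusted for the $\sqrt{\cdot}$ loss and any normalization of utilities by $d$ so that they lie in a bounded range. Since $\mu$ has at most $d^C$ profiles, $S_x$ has at most $d^C$ points, which matches the support bound required by \Cref{th:fixed set hardness}.
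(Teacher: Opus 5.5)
\textbf{Verdict: same route as the paper, with one open step and one parameter error.} Your approach matches the paper's: the same imitation game, and the same conditional-mean deviations giving $\E_\mu\norm{\vx-\E_\mu[\vy\mid\vx]}_2^2\le\eps$ and $\E_\mu\norm{\vy-\E_\mu[f(\vx)\mid\vy]}_2^2\le\eps$. These are then combined by the triangle inequality, Jensen, and Cauchy--Schwarz.

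\textbf{The open step is the one you call the main obstacle.} \Cref{def:fixed set} is indeed a weighted average, but the weights must be a \emph{stationary} distribution of the same transition matrix used to form the convex combinations. Stationarity is what gives the notion content. Take $S=\{\vx_1,\vx_2\}$ with $\vx_1=f(\vx_2)$, where both rows of $P$ are the point mass on state $2$. Putting all weight on $\vx_1$ gives zero error for any $\vx_2$. The only stationary distribution, the point mass on $\vx_2$, instead makes the error equal to $\norm{\vx_2-f(\vx_2)}_2$. ``Convex hulls compose'' does not address this.

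\textbf{How to close it.} Write down the composed chain $P(\vx'\mid\vx)=\sum_\vy\mu(\vy\mid\vx)\mu(\vx'\mid\vy)$ on $S=\supp\mu_\vx$. Then check that $\sum_\vx\mu_\vx(\vx)P(\vx'\mid\vx)=\sum_\vy\mu_\vy(\vy)\mu(\vx'\mid\vy)=\mu_\vx(\vx')$, as the paper does. With this, your $\mu_\vx$-averaged bound is exactly the $2\sqrt{\eps}$-fixed-set condition. The pruning fallback is unnecessary. It is also not straightforward, since deleting support points and renormalizing breaks the chain and its stationarity.

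\textbf{The constant runs the wrong way.} Suppose \Cref{th:fixed set hardness} gives hardness at accuracy $d^{-c_1}$. A $d^{-c}$-CE yields only a $2d^{-c/2}$-fixed set, so you need $c$ roughly \emph{twice} $c_1$, not half. For example, $c=2c_1+2$ works for $d\ge2$. No normalization by $d$ is needed: with the normalized norms, both utilities already lie in $[-1,0]$. The support bound transfers directly, since $|\supp\mu_\vx|\le|\supp\mu|\le d^C$.
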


That is, surprisingly, computing a correlated equilibrium is as hard as computing a Nash equilibrium, challenging the conventional view in game theory. \Cref{cor:concave hardness} follows from~\Cref{th:fixed set hardness} using the standard two-player \emph{imitation game} (\emph{e.g.},~\citealp{Mclennan10:Simple,Babichenko16:Query,Rubinstein16:Settling}), with utilities $u_1(\vx,\vy)=-\norm{\vx-\vy}_2^2$ and $u_2(\vx,\vy)=-\norm{\vy-f(\vx)}_2^2$ (\Cref{prop:ce-fixed-set reduction}).

Moreover, the same reduction readily gives an unconditional lower bound in the query model.

\begin{maintheorem}[Query hardness]\label{cor:concave query hardness}
    For every fixed $C\ge 1$, there is a constant $c>0$ such that any randomized algorithm that, with probability at least $2/3$, computes a $d^{-c}$-correlated equilibrium supported on at most $d^C$ profiles requires $2^{\Omega(d/\log d)}$ utility queries in the worst case. This holds even in two-player concave games on $[0,1]^d$, where each utility is quadratic in the player's own strategy.
\end{maintheorem}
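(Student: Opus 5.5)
The plan is to transfer a known query lower bound for approximate fixed points into a lower bound for fixed sets, and then move it to correlated equilibria through the imitation-game reduction (\Cref{prop:ce-fixed-set reduction}). The starting point is the standard query lower bound for Brouwer fixed points, due to Hirsch--Papadimitriou--Vavasis and in its high-dimensional, constant-precision form to \citet{Rubinstein16:Settling} and \citet{Babichenko16:Query}. It says there is a family of Lipschitz functions $g:[0,1]^d\to[0,1]^d$ for which any randomized algorithm that finds an $\eps_0$-approximate fixed point (with $\eps_0$ a small constant in normalized $\ell_2$) with probability $2/3$ needs $2^{\Omega(d)}$ queries. These functions come from embedding a hard End-of-Line or path instance on $\{0,1\}^{\Theta(d)}$ via error-correcting codes, and each value $g(\vx)$ depends only on $O(1)$ queries to the hidden path.

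The second step is to check that the proof of \Cref{th:fixed set hardness} is a black-box reduction in the query sense. It builds from $g$ a new function $f$ on $[0,1]^{d'}$ with $d'=\mathrm{poly}(d)$, presumably by the amplification construction of the later section. Each evaluation of $f$ should use only $\mathrm{poly}(d)$ evaluations of $g$. Moreover, from any $d'^{-c}$-fixed set $S$ of $f$ with $|S|\le d'^C$, one should be able to recover an $\eps_0$-fixed point of $g$ using only $\mathrm{poly}(d)$ further queries to $g$, for instance by evaluating $g$ at the candidate points read off from $S$. I will go through that construction and confirm both properties. Given them, an algorithm using $Q$ queries to $f$ yields one using $Q\cdot\mathrm{poly}(d)$ queries to $g$. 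Writing everything in terms of the final dimension $d'=d^{O(1)}$, the bound $2^{\Omega(d)}$ becomes $2^{\Omega(d'^{\,\alpha})}$ for some constant $\alpha$.

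The third step handles the gap between that and the claimed $2^{\Omega(d/\log d)}$. Getting the stronger bound requires the blowup to be quasi-linear, $d'=O(d\log d)$, rather than polynomial. My first attempt is to amplify in precision rather than in dimension: start from a hard instance that already has inverse-polynomial precision, so that the fixed-set-to-fixed-point step (which loses roughly a factor $|S|$ in precision) costs only a polylogarithmic factor in dimension, for example by encoding the required $\log d$ bits of extra resolution per coordinate. If that works, the loss is a single $\log d$ factor, which explains the $d/\log d$ in the exponent.

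Finally, I apply \Cref{prop:ce-fixed-set reduction}. A query to $u_1(\vx,\vy)=-\norm{\vx-\vy}_2^2$ needs no oracle call at all, and a query to $u_2$ needs one evaluation of $f$. Also, a $d^{-c}$-CE supported on $d^C$ profiles gives a fixed set of comparable size and a polynomially related precision. So utility queries and $f$-queries coincide up to constant factors, and the bound carries over to the game. I expect the main obstacle to be the second and third steps: verifying that the hardness construction is genuinely oracle-local, and keeping the dimension blowup at $O(d\log d)$.
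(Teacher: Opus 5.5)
Your pipeline is the paper's: the constant-precision fixed-point query lower bound on $[0,1]^{d_0}$, then the amplification of \Cref{sec:hardness-fixedsets}, then the imitation game of \Cref{prop:ce-fixed-set reduction}, where $u_1$-queries are free and each $u_2$-query costs one evaluation of the amplified map. In the paper's notation $f$ is the hard map and $g$ the amplified one. The trouble is that the proposal never establishes the one quantitative fact the bound rests on. With your guessed $\mathrm{poly}(d)$ blowup you would get at best $2^{\Omega(d^{\,\alpha})}$ for an unspecified $\alpha$.

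Your repair in the third step does not work, because its premise is false. There is no fixed-set-to-fixed-point step that loses only a factor $|S|$ in precision: even an exact two-point fixed set can be far from every fixed point. Take $f(x)=\min\{1,\max\{0,2x-\tfrac12\}\}$ on $[0,1]$ and any $a\le 1/4$. The points $\tfrac12\pm a$, with $P(i,i)=3/4$ and $P(i,\text{other})=1/4$, form an exact fixed set with uniform stationary distribution, yet both have residual $a$. Replacing $f$ by $f_\alpha$ only changes the off-diagonal weight to $\alpha/(2+2\alpha)$. So no extra starting precision or resolution can rescue a single-map argument. What is needed is several scales forced to share one transition matrix, which is exactly the multi-scale construction.

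The missing idea is that this construction is only logarithmically larger than $f$. Consider $g(\vx^{(1)},\dots,\vx^{(r)})=(f_{\alpha_1}(\vx^{(1)}),\dots,f_{\alpha_r}(\vx^{(r)}))$.
\begin{itemize}
\item The entropy potential $h$ lies in $[0,\log k]$, so some layer has entropy drop at most $\log k/r$.
\item With $L,\delta$ the absolute constants of the hard instances, it therefore suffices to take $r=(1+\lceil\log_2K\rceil)\lceil128(L+1)^2/\delta^2\rceil=O(\log K)$ layers with a constant base $b$. This gives $\eps=\delta/(16rb^r)=K^{-O(1)}$.
\item Choose $K=d_0^{2\lceil C\rceil}$. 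The game dimension is then $D=d_0r=O(d_0\log d_0)$ with $D^C\le K$.
\item Each evaluation of $g$, and hence each $u_2$-query, costs exactly $r=O(\log d_0)$ queries to $f$, one per layer.
\item An $(\eps^2/4)$-CE is $D^{-c}$-accurate for a suitable constant $c$. It yields an $\eps$-fixed set of $g$, and a $\delta$-fixed point of $f$ is found by testing the $kr=\mathrm{poly}(d_0)$ candidates $\vx_i^{(\ell)}$.
\end{itemize}
Hence a $Q$-query algorithm gives $Q\cdot O(\log d_0)+\mathrm{poly}(d_0)\ge 2^{\Omega(d_0)}$. So $Q\ge 2^{\Omega(d_0)}=2^{\Omega(D/\log D)}$: the $\log d$ in the exponent comes from the $O(\log d_0)$ layers, not from precision.
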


To see this, we start from the $2^{\Omega(d_0)}$ randomized query lower bound for finding a fixed point of a Lipschitz map on $[0,1]^{d_0}$, which holds even when the Lipschitz and precision parameters are absolute constants~\citep{Hirsch89:Exponential,Rubinstein16:Settling}. For a prescribed polynomial support bound, our reduction constructs a game of dimension $d=O(d_0\log d_0)$ in which each utility query can be simulated with $O(\log d_0)$ queries to the original map.

Finally, we obtain hardness for multilinear games on explicitly represented polytopes.
\begin{restatable}[CE hardness in multilinear polytope games]{maintheorem}{polytopemult}
\label{th:multilinear hardness}
    Let $N$ be the binary encoding length of the game's description. For every fixed $C \ge 1$, computing an $\eps$-correlated equilibrium supported on at most $N^C$ profiles in a multilinear game on explicit rational polytopes is
    \begin{enumerate}[label=(\roman*)]
        \item \PPAD-hard at accuracy $\eps=2^{-\operatorname{poly}(N)}$;
        \item \PPAD-hard at accuracy $\eps=N^{-c}$ for some constant $c>0$, assuming the PCP-for-PPAD conjecture (\Cref{ass:pcp-ppad}).\footnote{OpenAI recently claimed a proof of the required \href{https://github.com/openai/math/blob/main/preprints/The-PCP-for-PPAD-conjecture-a-quasilinear-reduction-September-25-2026/paper.pdf}{PCP-for-PPAD theorem}. We refrain from using this claim as an established theorem until it has been verified by human experts, and state this part conditionally.}
    \end{enumerate}
    Both statements hold even when the number of players is an absolute constant.
\end{restatable}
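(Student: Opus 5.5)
The plan is to derive \Cref{th:multilinear hardness} from \Cref{th:fixed set hardness} by replacing the concave imitation game of \Cref{cor:concave hardness} with a multilinear game on explicit polytopes. The obstacle is that a multilinear game has no quadratic self-terms such as $-\norm{\vx-\vy}_2^2$, and the players cannot evaluate an arbitrary polynomial-time $f$ themselves. I will therefore first express $f$ as an arithmetic circuit with gates from $\{+,\times\zeta,\max,\min\}$ and constants; the generalized-circuit (\textsf{GCircuit}) framework of \citet{Daskalakis08:Complexity,Chen09:Settling,Rubinstein16:Settling} allows this. Each gate is then simulated by a constant number of "gate players" whose strategy spaces are intervals, or more generally small explicit polytopes, and whose utilities are bilinear in their own strategy and the strategies of their input gates. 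To keep the number of players an absolute constant, I will group all gates of the same type and layer parity into a single player. That player's strategy space is a product of intervals, which is an explicit polytope, and its utility is the sum of the per-gate bilinear utilities, so it is multilinear. Following the padding trick of \citet{Rubinstein16:Settling}, I will colour the circuit so that no player's utility depends on its own coordinates through a product, which keeps every utility multilinear.

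The key steps are as follows.

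\textbf{Step 1.} Choose the imitation structure. Player $X$ plays $\vx\in[0,1]^d$, and the circuit players compute $\vz\approx f(\vx)$ gate by gate. The "$\norm{\vx-\vy}^2$" coupling is replaced by the linear-in-own-strategy trick: player $X$ receives utility $\sum_i x_i(z_i-\tfrac12)$-type comparison gadgets. Equivalently, each coordinate $x_i$ gets a "brittle comparator/averaging" gate that forces $x_i\approx z_i$ in any best response.

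\textbf{Step 2.} Prove a robustness lemma. In any $\eps$-CE $\mu$ supported on $K\le N^C$ profiles, consider the marginal points $S=\{\vx^{(k)}\}$. For each recommended $\vx^{(k)}$, the CE condition for the gate players, conditioned on the recommendation, shows that the conditional expectation of their recommended outputs computes $f$ approximately. Here I use that each gate player's utility is linear in its own strategy, so the deviation constraints are exactly first-order conditions on conditional expectations. The constraints for $X$ then give $\vx^{(k)}\approx \E[f(\vx)\mid \text{rec}]$, and an averaging/Carathéodory argument yields $\vx^{(k)}\in\operatorname{conv} f(S)$ up to error. This is precisely the argument of \Cref{prop:ce-fixed-set reduction}, transplanted to the gate-level setting.

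\textbf{Step 3.} Compose the errors. For part (i), use a polynomial-depth circuit with error blow-up $2^{\operatorname{poly}(N)}$, so exponentially small $\eps$ suffices against the fixed-set hardness at accuracy $d^{-c}$. For part (ii), PCP-for-PPAD (\Cref{ass:pcp-ppad}) gives constant-depth/low-error-amplification hardness of $\eps$-\textsf{GCircuit} at constant $\eps$ for a $1-\delta$ fraction of gates. Combined with the amplification of \Cref{th:fixed set hardness}, this should tolerate $N^{-c}$ error spread over most gates.

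I expect the main obstacle to be Step 2. Brittle gates such as comparators behave badly under correlation: conditional on a recommendation, the gate inputs are distributions rather than points, so a gate computes $\E[\max(a,b)]$ instead of $\max(\E a,\E b)$, and the nonlinearity of $f$ breaks the conditional-expectation argument. My first attempt would be to exploit the small support. Since there are only $K$ profiles, I would add a dedicated "index" component so that each atom's recommendation for a gate player is essentially deterministic given its input. This makes conditioning collapse to single profiles up to $O(K\eps)$ mass, which is affordable because $K\le N^C$ while $\eps$ is chosen as $N^{-c}$ with $c$ depending on $C$.
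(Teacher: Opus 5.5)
\textbf{The gap starts in Step 1 and is not repaired in Step 2.} Every utility in a multilinear game is affine in the player's own strategy. A term like $\sum_i x_i(z_i-\tfrac12)$ on an interval therefore gives bang-bang best responses: $x_i$ is pushed to $0$ or $1$ according to the sign of $\E[z_i-\tfrac12\mid\vx]$. That is a comparator, not an imitation. A player with an affine utility on $[0,1]^d$ cannot track $\E[z_i\mid\vx]$. Nash-style gadgets obtain equality only through indifference conditions of auxiliary players, and that is exactly what correlation destroys.

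Yet everything downstream needs a genuinely concave, quadratic imitation. This includes your transplant of \Cref{prop:ce-fixed-set reduction}. The missing idea is to move the quadratic self-term into the strategy polytope. Lift each coordinate to $(x,t,w_1,\dots,w_m)$ with the constraints $w_0=x$, $w_j\le 2w_{j-1}$, $w_j\le 2(1-w_{j-1})$ and $t\ge x-\sum_j 4^{-j}w_j$. The smallest feasible $t$ is the sawtooth interpolant $p_m(x)\in[x^2,x^2+4^{-(m+1)}]$ (\Cref{cl:multilinear lift}). A linear cost $c(\vs)$ then acts as $\norm{\vx}_2^2$, and $\tfrac{2}{d}\ip{\vx,\cdot}-c(\vs)$ becomes an approximate quadratic imitation utility that is still linear in the lifted strategy. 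This costs only $O(m)$ extra variables per coordinate.

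\textbf{Your proposed fix for Step 2 does not work.} You correctly see that gate-by-gate simulation breaks under correlation: a $\max$ or comparator gate computes $\E[\max(a,b)]$ rather than $\max(\E a,\E b)$. The ``index component'' repair cannot be enforced. The CE is chosen adversarially and may pool atoms, and imitation-type incentives only ever pin down conditional means. Nothing in the game can make a player's recommendation determine its inputs. If it could, the CE conditions would put most of the mass on approximate Nash equilibria.

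\textbf{What the paper does instead.} It never places a nonlinear gate inside the game. It replaces each gate function by its degree-$p$ Bernstein polynomial, which keeps the range $[0,1]$ and bounded derivatives. It applies the multi-scale amplification to the resulting map to get $g$. It then evaluates $g$ multilinearly through $q=2p+1$ copy players, each of which merely imitates $\vx$, so different factors of each monomial belong to different players. The player $\vx$ is rewarded with weight $M$ for agreeing with the copies, which yields $M\sum_j\E\norm{\vz^{(j)}-\vx}_2^2\le 1+(1+2qM)\rho$.

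The analysis in \Cref{cl:multilinear extraction} uses only the imitation conditions and Jensen's inequality for convex $1$-Lipschitz $Q$. Neither is sensitive to correlation, and together they show that the $\vx$-marginal is an $\eta$-convex EFP of $g$. The fixed point is then extracted by \Cref{cl:multilinear amplification}; no gate-level robustness lemma is needed.

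\textbf{Consequences for Step 3.} \Cref{th:fixed set hardness} is not used as a black box, because the copy-player evaluation needs a constant-degree, constant-fan-in map. In (i), the exponential precision does not come from error blow-up along circuit depth. It arises because satisfying every gate forces $\delta=\Theta(1/d)$, which makes $r$ and $b$ polynomial and hence $\zeta=\delta/(16rb^r)=2^{-\operatorname{poly}(n)}$. In (ii), PCP-for-PPAD allows a constant $\delta$, so $r=O(\log n)$, $b=O(1)$, and inverse-polynomial accuracy suffices.
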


It is interesting to point out that our results immediately also imply \PPAD-hardness for computing a point in the convex hull of Nash equilibrium distributions---which is another problem we have been curious about---under the same output representation and support-accuracy tradeoff. Indeed, every convex combination of Nash equilibrium distributions is a correlated equilibrium.

\subsection{Technical overview}

We now give an informal sketch of the proof of \Cref{th:fixed set hardness}, setting aside issues such as approximation. The quantitative version of the argument appears in~\Cref{sec:hardness-fixedsets}.

It is instructive to first treat an intermediate case between fixed points and fixed sets---periodic orbits. In this special case, hardness follows directly from a scaling argument, which has the effect of \emph{amplifying the length of any periodic orbit}. In particular, given an $L$-Lipschitz map $f : [0,1]^d \to [0,1]^d$, we define $f_\alpha(\vx)=(1-\alpha)\vx+\alpha f(\vx)$ for a small parameter $\alpha \in (0,1]$. If $\vx_1,\dots,\vx_k$ form a periodic orbit of $f_\alpha$, with $\vx_{k+1}=\vx_1$, then consecutive points are at distance at most $\alpha$ in the normalized Euclidean norm, so the orbit has diameter at most $k\alpha$. Moreover, summing around the orbit gives $\sum_{i=1}^k (f(\vx_i)-\vx_i)=0$, and hence
\begin{align}
    \norm{f(\vx_1)-\vx_1}_2
    &\le \frac1k\sum_{i=1}^k \norm{(f(\vx_1)-\vx_1)-(f(\vx_i)-\vx_i)}_2
    \le (L+1)k\alpha.
\end{align}
Thus, choosing $\alpha\le\delta/((L+1)k)$ ensures that any orbit of length at most $k$ yields a $\delta$-fixed point of $f$.

However, the challenge for a general fixed set is that each point is a convex combination of images of other points, rather than the image of a single predecessor. Instead of using a single scale, the key idea is to use \emph{several scales} simultaneously: the same transition matrix must form a fixed set at every scale, and we exploit this constraint to recover an approximate fixed point.

Specifically, we construct a function $g : [0, 1]^{dr} \to [0, 1]^{dr}$, where $r$ will be picked later, as follows: we pick $\alpha_1, \dots, \alpha_r \in (0,1]$, and set
\begin{align}
    g(\vx^{(1)}, \dots, \vx^{(r)}) = \ab(f_{\alpha_1}(\vx^{(1)}), \dots, f_{\alpha_r}(\vx^{(r)})).
\end{align}
(This lifting is illustrated in~\Cref{fig:fixed-sets-scales}.) Let $S = \{ \vx_1, \dots, \vx_k \}$ be a fixed set of $g$, and let $P$ be a transition matrix satisfying $\vx_i = \E_{j \sim P(i)} g(\vx_j)$ for every $i \in [k]$. Our goal is to show that $S$ yields an approximate fixed point of $f$; indeed, we will show that there are $i \in [k]$ and $\ell \in [r]$ for which $\vx^{(\ell)}_i$ is an approximate fixed point of $f$.
\begin{figure}[!htb]
\centering
\includegraphics[width=\textwidth]{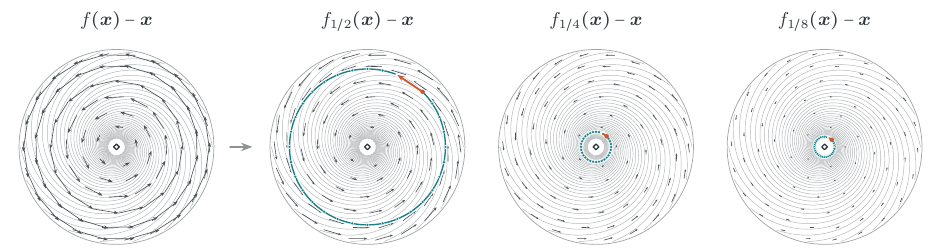}\par
\nointerlineskip
\vspace{4pt}
{\color{black!20}\rule{0.96\textwidth}{0.3pt}}\par
\nointerlineskip
\vspace{4pt}
\includegraphics[width=\textwidth]{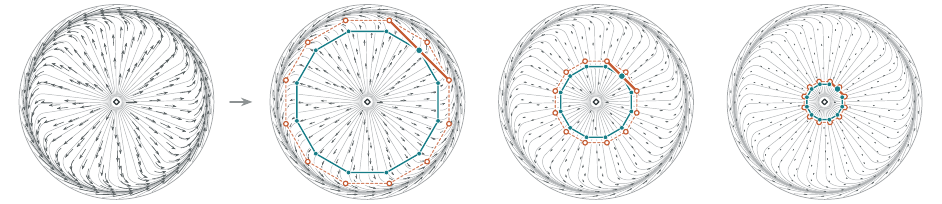}
\caption[Fixed sets at three scales of two different maps.]{Top row. The original rotation $f$ and three scaled maps $f_\alpha$, with common fixed point $\vx_*=0$. Gray curves depict the displacement flow, and black arrows show $f_\alpha(\vx)-\vx$. Teal points form approximate fixed sets with the same transition matrix and equal error across the three scales.
\par\smallskip
Bottom row. A different map $f$ and its three scaled maps. Teal points form regular $12$-gons; hollow orange points are their images under $f_\alpha$. Each support point is the midpoint of the images of its two neighbors, as illustrated by the orange chord. Thus, $P(i,i-1)=P(i,i+1)=1/2$ works at all three scales, with uniform stationary distribution.}
\label{fig:fixed-sets-scales}
\end{figure}

The key observation is the following. For $\ell \in [r]$ let $S_\ell = \{ \vx_1^{(\ell)}, \dots, \vx_k^{(\ell)} \}$. Then, by definition, $S_\ell$ is a fixed set of $f_{\alpha_\ell}$ with transition matrix $P$. Therefore, we have constructed fixed sets at multiple ``scales'' of $f$, {\em that use the same transition matrix}. We show that if we do this for ``enough'' values of $\alpha \in (0,1]$, then we must find a fixed point.

\begin{proposition}\label{prop:non-quantitative}
    Let $P : [k] \to \Delta(k)$ be a transition matrix. For every $\alpha \in (0,1]$ let $S_\alpha$ be a fixed set of $f_\alpha$ with transition matrix $P$. Then the closure of $\bigcup_{\alpha \in (0,1]} S_\alpha$ contains a fixed point of $f$.
\end{proposition}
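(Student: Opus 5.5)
The plan is to find a closed recurrent class of $P$, show that as $\alpha\to0$ the points of $S_\alpha$ indexed by that class collapse to a single point, and then use a stationary-distribution averaging identity to show that this point is an approximate fixed point.

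Write $S_\alpha=\{\vx_1(\alpha),\dots,\vx_k(\alpha)\}$. The fixed-set condition with transition matrix $P$ reads
\[
\vx_i(\alpha)=\sum_j P_{ij}\bigl((1-\alpha)\vx_j(\alpha)+\alpha f(\vx_j(\alpha))\bigr)\quad\text{for all } i\in[k].
\]
Equivalently, with $\vv_j(\alpha)\defeq f(\vx_j(\alpha))-\vx_j(\alpha)$,
\[
\vx_i(\alpha)-\sum_j P_{ij}\vx_j(\alpha)=\alpha\sum_j P_{ij}\vv_j(\alpha).
\]
Since $\vv_j$ is bounded on $[0,1]^d$, the right-hand side has norm $O(\alpha)$, uniformly in $i$ and $\alpha$.

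\textbf{Choosing a closed class.} Fix a closed communicating (recurrent) class $\mathcal C\subseteq[k]$ of the Markov chain $P$. Such a class exists because the chain is finite. For $i\in\mathcal C$ we have $P_{ij}=0$ whenever $j\notin\mathcal C$, so the displayed system restricts to $\mathcal C$. Let $P_{\mathcal C}$ denote the restricted chain. It is irreducible, so it has a unique stationary distribution $\pi$ with full support on $\mathcal C$.

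\textbf{Averaging identity.} Multiply the restricted equation by $\pi_i$ and sum over $i\in\mathcal C$. By stationarity the left-hand side vanishes. Dividing by $\alpha>0$ gives
\[
\sum_{j\in\mathcal C}\pi_j\,\vv_j(\alpha)=0\qquad\text{for every }\alpha\in(0,1].
\]
This is the analogue of summing around a periodic orbit in the overview.

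\textbf{Collapse (main step).} This is the step I expect to be the main obstacle. Because $P_{\mathcal C}$ is irreducible, $\ker(I-P_{\mathcal C})$ consists only of constant vectors, applied coordinatewise in $\R^d$. Hence $I-P_{\mathcal C}$ is injective on the complement $\{\vy:\sum_{i\in\mathcal C}\pi_i\vy_i=0\}$. Set $\bar\vx(\alpha)=\sum_{i\in\mathcal C}\pi_i\vx_i(\alpha)$. Applying $I-P_{\mathcal C}$ to $(\vx_i(\alpha)-\bar\vx(\alpha))_{i\in\mathcal C}$, which lies in that complement, gives a vector of norm $O(\alpha)$. The inverse on the complement is bounded by a constant $K_P$ depending only on $P$, by finite-dimensional linear algebra or a group-inverse bound. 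Therefore
\[
\max_{i\in\mathcal C}\norm{\vx_i(\alpha)-\bar\vx(\alpha)}\le K_P\,O(\alpha).
\]
The point to get right here is that $K_P$ does not depend on $\alpha$, which holds because $P$ is the same at every scale. Without that, different matrices could degenerate.

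\textbf{Conclusion.} Fix any $i_0\in\mathcal C$. Choose a sequence $\alpha_n\to0$ along which $\vx_{i_0}(\alpha_n)$ converges to some $\vx_*$, using compactness of $[0,1]^d$. By the collapse step, every $\vx_j(\alpha_n)$ with $j\in\mathcal C$ also converges to $\vx_*$. Passing to the limit in the averaging identity, using continuity of $f$, gives
\[
\sum_{j\in\mathcal C}\pi_j\bigl(f(\vx_*)-\vx_*\bigr)=f(\vx_*)-\vx_*=0.
\]
So $\vx_*$ is a fixed point of $f$. It is a limit of points of $\bigcup_\alpha S_\alpha$, hence lies in the closure. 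If $f$ is $L$-Lipschitz, the same argument also yields the quantitative bound $\norm{f(\vx_{i_0}(\alpha))-\vx_{i_0}(\alpha)}\le (L+1)K_P\,O(\alpha)$, which is what the quantitative version would need.
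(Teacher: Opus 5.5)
Your proof is correct and follows the paper's route: restrict to a closed (sink) class, show its points collapse as $\alpha\to0$, and pass to the limit in the stationary averaging identity $\sum_{j\in\mathcal C}\pi_j\bigl(f(\vx_j(\alpha))-\vx_j(\alpha)\bigr)=0$. The only difference is in the collapse step: you get it before taking limits, at rate $K_P\,O(\alpha)$, from the bounded inverse of $I-P_{\mathcal C}$ on $\pi$-mean-zero vectors, whereas the paper first extracts subsequential limits and applies a maximum principle to the resulting harmonic family.

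One caution about your closing remark. The constant $K_P$ is not bounded uniformly over $k$-state chains; it blows up for slowly mixing chains, e.g.\ two states that switch with probability $\epsilon$. Moreover, in the quantitative setting $P$ is chosen after the scales are fixed. So that bound does not yield the quantitative theorem, which is exactly why \Cref{sec:hardness-fixedsets} needs the multiscale entropy argument.
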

\begin{proof}
    Assume WLOG that the transition graph of $P$ is strongly connected; otherwise, replace it with a sink strongly connected component. Let $S_\alpha = \{ \vx_1^{(\alpha)}, \dots, \vx_k^{(\alpha)} \}$. Then we have
    \begin{align}\label{eq:non-quantitative}
        \vx_i^{(\alpha)} = \E_{j \sim P(i)} \ab[ (1 - \alpha) \vx_j^{(\alpha)} + \alpha f(\vx_j^{(\alpha)})].
    \end{align}
    Let $\{\alpha_r\}_{r=1}^\infty$ be a sequence converging to $0$, and $\vx_i = \lim_{r \to \infty} \vx_i^{(\alpha_r)}$. (If this limit does not exist, replace $\{\alpha_r\}$ by a subsequence so that it does; this is allowed because of compactness.) Replacing this limit in \eqref{eq:non-quantitative} yields
    \begin{align}
        \vx_i = \E_{j \sim P(i)} \vx_j.
    \end{align}
    We claim now that $\vx_1 = \dots= \vx_k$. Indeed, consider any direction $\va$ and let $i$ be chosen to maximize $\ip{\va, \vx_i}$. Then
    \begin{align}
        \ip{\va, \vx_i} = \E_{j \sim P(i)} \ip{\va, \vx_j}.
    \end{align}
    By induction and strong connectivity of the transition graph, it follows that $\ip{\va, \vx_j} = \ip{\va, \vx_i}$ for {\em every} $j$. Since this is true for every $\va$, we have  $\vx_1 = \dots= \vx_k \eqdef \vx$.
    
    Moreover, taking expectations in \eqref{eq:non-quantitative} over the stationary distribution $\mu$, and dividing by $\alpha > 0$, yields
    \begin{align}\label{eq:qualitative-efp}
        \E_{i \sim \mu} \vx_i^{(\alpha)} = \E_{i \sim \mu}f(\vx_i^{(\alpha)}).
    \end{align}
    Since $\vx_i^{(\alpha_r)} \to \vx$ for all $i$, taking the limit $r \to \infty$ now yields $\vx = f(\vx)$.
\end{proof}

\paragraph{The multilinear extension.} We now explain how to go from concave quadratic utilities to a multilinear polytope game. The obvious approach that comes to mind here is to employ some form of discretization. However, the issue is that this would make the game description grow polynomially with the inverse accuracy. At the same time, \Cref{th:fixed set hardness} prescribes that the accuracy needs to be small enough relative to the size of the support. As a result, it is not clear whether there is a discretization that can handle any polynomial support bound. 

Instead, to arrive at~\Cref{th:multilinear hardness}, the key step is to essentially transfer the quadratic imitation term from the utility into the strategy constraints. Indeed, since $-\norm{\vx-\vy}_2^2 = \frac{2}{d}\ip{\vx,\vy}-\norm{\vx}_2^2-\norm{\vy}_2^2$ (in relative $\ell_2$ norm), it suffices to incorporate an auxiliary variable whose value approximates $\norm{\vx}_2^2$, for $\vx$; the last term does not depend on $\vx$, while the interaction term is already bilinear. The challenge is to express these auxiliary variables with few linear constraints. To do so, we rely on an approach in approximation theory (\emph{e.g.,}~\citealp{Yarotsky17:Error}), whereby each term $x_i^2$ in $\|\vx\|_2^2$ is replaced by a piecewise-linear interpolation that can be described with only $O(m)$ auxiliary variables and inequalities.

To cast the remaining polynomial interactions into multilinear form, we use the standard trick of assigning different factors of each monomial to different ``copy players.'' In particular, we can start with a polynomial map $f:[0,1]^d \to[0,1]^d$ of constant degree whose approximate fixed points are \PPAD-hard to compute. Since $f$ has constant degree, we require only a constant number of players. These steps are described at greater length in~\Cref{sec:polytope}.

\subsection{Further related work}

We now elaborate on related threads of research.

\paragraph{Computing correlated equilibria beyond polynomial type.} For games of polynomial type, as discussed, there are polynomial-time algorithms computing exact CE~\citep{Papadimitriou08:Computing,Jiang15:Polynomial-Time}. The closest predecessor to our work is~\citet{Anagnostides26:Complexity}, who introduced convex expected fixed points and established hardness of computing CE in concave quadratic games relative to the problem of finding a fixed point of a contraction~\citep{Fearnley20:Unique}. We strengthen this computational barrier to \PPAD-hardness, and further extend it to multilinear games on explicitly represented polytopes. A complementary positive result was obtained by~\citet{Anagnostides26:Normal}, who showed that CEs can be computed efficiently in finite-horizon Markov games with a fixed number of players. Related hardness results concern \emph{sparse} coarse correlated equilibria (CCE), represented as mixtures of a small number of product distributions: \citet{Foster23:Hardness} established hardness in Markov games, yielding lower bounds for independent learning. Analogous hardness results were later established in extensive-form games~\citep{Anagnostides24:Complexity,Peng24:Complexity}. Unlike these latter two results for extensive-form games, which apply even to coarse correlated equilibria, our main result rules out efficient computation for \emph{any} prescribed polynomial support bound. The complexity of computing an \emph{optimal} correlated equilibrium---as opposed to computing \emph{any} correlated equilibrium, which is our focus in this paper---in extensive-form games was recently characterized by~\citet{Cheval26:Complexity}.

\paragraph{Tractable relaxations.} One fruitful research direction has been to obtain tractable equilibrium concepts by restricting the set of deviations, in extensive-form games~\citep{Stengel08:Extensive} and beyond. \citet{Farina24:Polynomial} gave a polynomial-time algorithm for \emph{linear $\Phi$-equilibria}---whereby only linear deviations are considered---in polyhedral games, while~\citet{Daskalakis24:Efficient} established efficient learning and computation in general convex games. \citet{Arunachaleswaran25:Swap} introduced \emph{profile swap regret} for polytope games and gave efficient algorithms guaranteeing sublinear profile swap regret, which they showed is necessary and sufficient for \emph{non-manipulability}. This extends the non-manipulability guarantee of \emph{swap regret} in normal-form games~\citep{Deng19:Strategizing}, where a strategic opponent cannot asymptotically obtain more than its Stackelberg payoff against a learner with sublinear swap regret; we recall that sublinear swap regret leads to correlated equilibria~\citep{Hart00:Simple,Blum07:From}. \citet{Zhang25:Learning} considered a class of nonlinear deviations, including low-degree polynomial deviations. These developments build on the connection between fixed-point computation and regret minimization~\citep{Hazan07:Computational,Gordon08:No}. A central ingredient in their approach is the notion of an \emph{expected} fixed point, which can be computed efficiently~\citep{Zhang25:Learning}. Our fixed sets correspond instead to \emph{convex} expected fixed points, revealing a computational separation between these two relaxations of fixed points.

\paragraph{Learning with large strategy spaces.}
\citet{Stoltz07:Learning} examined convergence to correlated equilibria in games with convex strategy sets, and~\citet{Stein07:Characterization} developed approximation algorithms for correlated equilibria in continuous games, including polynomial games. More recently,~\citet{Peng24:Fast,Dagan24:From} obtained algorithms for minimizing swap regret even in large strategy spaces, yielding a polynomial-time approximation scheme for correlated equilibria in, for example, extensive-form games. However, the learning hardness result of~\citet{Anagnostides26:Complexity} rules out swap-regret guarantees polynomial in the game size and inverse accuracy. Such lower bounds leave open the possibility of efficient equilibrium computation, which is the focus of our paper. On the technical side, our entropy argument is related in spirit to the information-theoretic approach of~\citet{Harrow16:Tight} for approximating Nash equilibria.

\paragraph{Calibration and economic motivation.} Correlated equilibria are also intimately connected to forecasting and \emph{calibration}. \citet{Foster97:Calibrated} showed that best responses to calibrated forecasts of opponents' play lead the empirical distribution toward correlated equilibria. This connection has been further cultivated in recent years, where improved algorithms for minimizing swap regret~\citep{Dagan24:From,Peng24:Fast} have led to efficient algorithms for high-dimensional calibration~\citep{Peng25:Calibration,Fishelson25:Calibration}. Another use of swap regret is in the context of \emph{algorithmic collusion}~\citep{Calvano20:Artificial}, where it can be used to mitigate supracompetitive pricing~\citep{Hartline24:Regulation,Hartline25:Regulation}. Indeed, vanishing external regret alone can still permit collusive behavior. Recently, \citet{Hartline26:Computable} introduced the notion of \emph{information-value-free correlated equilibria}, and showed that they can be computed efficiently offline but cannot be learned by a broad class of learning algorithms. These connections show the importance of understanding the computational complexity of equilibrium concepts stronger than coarse correlated equilibria, which is what motivates our paper.

%% file: tikz/fixed-sets.tex
\begin{tikzpicture}[
  font=\small,
  >=Stealth,
  point/.style={circle,fill=black,inner sep=1.5pt},
  image point/.style={circle,draw=blue!65!black,fill=white,thick,inner sep=1.5pt},
  map arrow/.style={->,thick,blue!65!black,shorten >=3pt,shorten <=3pt}
]
  \begin{scope}
    \node[] at (2,4.25) {Fixed point};
    \node[point,label=below:{$\vx$}] (point) at (2,1.5) {};
    \draw[map arrow] (point) edge[loop above,min distance=14mm]
      node[above] {$f$} (point);
  \end{scope}

  \begin{scope}[xshift=5.2cm]
    \node[align=center] at (2,4.25)
      {Cycle\\(Periodic orbit)};
    \node[point,label=below:{$\vx_1$}] (cycle1) at (0.6,0.6) {};
    \node[point,label=below:{$\vx_2$}] (cycle2) at (3.4,0.6) {};
    \node[point,label=above:{$\vx_3$}] (cycle3) at (2,2.7) {};
    \draw[map arrow] (cycle1) to[bend right=12] node[below] {$f$} (cycle2);
    \draw[map arrow] (cycle2) to[bend right=12] node[right] {$f$} (cycle3);
    \draw[map arrow] (cycle3) to[bend right=12] node[left] {$f$} (cycle1);
  \end{scope}

  \begin{scope}[xshift=10.4cm]
    \node[align=center] at (2,4.25)
      {Fixed set};
    \coordinate (image1) at (0.3,0.45);
    \coordinate (image2) at (3.7,0.45);
    \coordinate (image3) at (2,2.95);
    \filldraw[fill=blue!6,draw=blue!55!black,dashed]
      (image1) -- (image2) -- (image3) -- cycle;
    \node[point,label=below:{$\vx_1$}] (set1) at (1.4,1.15) {};
    \node[point,label=below:{$\vx_2$}] (set2) at (2.6,1.15) {};
    \node[point,label=below:{$\vx_3$}] (set3) at (2,2) {};
    \draw[map arrow] (set1) -- (image1);
    \draw[map arrow] (set2) -- (image2);
    \draw[map arrow] (set3) -- (image3);
    \node[image point,label=below:{$f(\vx_1)$}] at (image1) {};
    \node[image point,label=below:{$f(\vx_2)$}] at (image2) {};
    \node[image point,label=above:{$f(\vx_3)$}] at (image3) {};
  \end{scope}
\end{tikzpicture}

%% file: text/preliminaries.tex
\section{Preliminaries}

We work in the standard Turing model. All numbers are rational numbers encoded in binary. Functions are given as arithmetic circuits with standard gates, see \eg,~\citet{Rubinstein16:Settling}. These choices are not crucial to our results; indeed, our results will follow from black-box reductions from Brouwer fixed points to our problems of interest.

Following~\citet{Rubinstein16:Settling}, for $\vx\in\R^d$ and $1\le p<\infty$ we adopt the normalized norm convention
\begin{align}
    \norm{\vx}_p = \ab(\frac{1}{d} \sum_{i=1}^d |x_i|^p)^{1/p}.
\end{align}
Inner products are defined as usual without any normalization: $\ip{\vx,\vy}=\sum_{i=1}^d x_i y_i$, so $|\ip{\vx,\vy}|/d\le\norm{\vx}_2\norm{\vy}_2$.

\begin{definition}\label{def:fixed point}
    Let $f : \cX \to \cX$, where $\cX \subseteq \R^d$. An $\eps$-{\em fixed point} of $f$ is a point $\vx \in \cX$ such that
    \begin{align}
        \norm{f(\vx) - \vx}_2 \le \eps,
    \end{align}
    where $\norm{\cdot}_2$ is the normalized Euclidean norm. A $0$-fixed point is simply called a fixed point.
\end{definition}

To avoid measure-theoretic issues, we assume throughout that distributions have finite support. Algorithms that output distributions will give outputs by explicitly listing all the support points and their associated probabilities. A polynomial-time algorithm can therefore output only distributions supported on polynomially many points, with probabilities of polynomial bit length.

\input{text/concave-games.tex}

\subsection{The PCP-for-PPAD conjecture}

We now state the precise version of the \emph{PCP-for-PPAD conjecture} of~\citet{Babichenko16:Can} we need for~\Cref{th:multilinear hardness}. A \emph{generalized circuit} assigns a value in $[0,1]$ to each of $d$ nodes, each of which is the output of exactly one gate. Gates have at most two inputs and cycles are allowed. The circuit has \emph{fan-out} two if each node occurs in at most two input positions. All gate parameters $\beta$ are rational numbers in $[0,1]$. For input values $a,b$ and output value $c$, the arithmetic gates are satisfied at tolerance $\tau$ when the following conditions hold depending on the type of the gate:
\begin{itemize}
    \item \emph{constant}: $|c-\beta|\le\tau$;
    \item \emph{scaling}: $|c-\beta a|\le\tau$;
    \item \emph{copying}: $|c-a|\le\tau$;
    \item \emph{addition}: $|c-\min\{a+b,1\}|\le\tau$; and
    \item \emph{subtraction}: $|c-\max\{a-b,0\}|\le\tau$.
\end{itemize}
Finally, the comparison gate is satisfied when both implications
\begin{align}
    a<b-\tau\ \Longrightarrow\ c\ge1-\tau
    \text{ and } a>b+\tau\ \Longrightarrow\ c\le\tau
\end{align}
hold. We are ready to state the formal version of the conjecture.

\begin{assumption}[PCP-for-PPAD]\label{ass:pcp-ppad}
    There are absolute constants $0<\tau<1/4$ and $0<\gamma<1$ such that the following problem is \PPAD-hard: given a generalized circuit with fan-out two, compute an assignment satisfying at least $(1-\gamma)d$ gates at tolerance $\tau$. 
\end{assumption}

That is, this postulates that \PPAD-hardness of generalized circuits persists even when an arbitrary set of at most $\gamma d$ gates can fail.

%% file: text/concave-games.tex
\subsection{Concave games and correlated equilibria}

An $n$-player {\em concave game} $\Gamma$ is given by convex compact strategy sets $\cX_i \subset \R^{d_i}$ and utilities $u_i : \cX \to \R$, for $i\in[n]$, where $\cX=\prod_{i=1}^n\cX_i$. For each player $i$, $u_i(\vx_i,\vx_{-i})$ is concave in $\vx_i$ for every fixed profile $\vx_{-i}$ of the other players' strategies.

A \emph{multilinear polytope game} is a special case in which each $\cX_i$ is a compact polytope and each utility is a multilinear polynomial, so that every monomial contains at most one coordinate of each player's strategy (raised to the first power). Thus, each utility is affine in a player's strategy when the other players' strategies are fixed. We represent each polytope by an explicit list of linear inequalities with rational coefficients and each utility by an explicit list of monomials with rational coefficients. We write $N$ for the total encoding length of this game description.

\begin{definition}[\citealp{Aumann74:Subjectivity}]
An $\eps$-{\em correlated equilibrium ($\eps$-CE)} of a game $\Gamma$ is a distribution $\mu \in \Delta(\cX)$ such that
\begin{align}
    \E_{\vx \sim \mu} \ab[ u_i(\phi_i(\vx_i), \vx_{-i}) - u_i(\vx) ] \le \eps
\end{align}
for every player $i\in[n]$ and function $\phi_i : \cX_i \to \cX_i$.    
\end{definition}

%% file: text/fixed-sets.tex
\section{Fixed sets and convex expected fixed points}

In this section, we introduce the notion of a \emph{fixed set}, and connect it to an earlier definition by~\citet{Anagnostides26:Complexity}. Let $\cX \subset \R^d$ be convex and compact, and $f : \cX \to \cX$ be a continuous function. 

\begin{definition}[\citealp{Anagnostides26:Complexity}] \label{def:cefp}
An $\eps$-{\em convex expected fixed point} (hereafter {\em convex EFP}) of $f$ is a distribution $\mu \in \Delta(\cX)$ such that
\begin{align}
    \E_{\vx\sim\mu} \ab[Q(\vx) - Q(f(\vx))] \le \eps
\end{align}
for every convex $Q : \cX \to \R$ that is $1$-Lipschitz under $\norm{\cdot}_2$.
\end{definition}

Restricting $Q$ to linear $1$-Lipschitz functions gives the condition $\norm{\E_{\vx\sim\mu}[\vx-f(\vx)]}_2\le\eps$, known simply as an \emph{expected fixed point}~\citep{Zhang24:Efficient}, which was shown to admit a polynomial-time algorithm~\citep{Zhang25:Learning}.

\begin{definition}\label{def:fixed set}
    An $\eps$-{\em fixed set} of $f$ is a finite set $S = \{ \vx_1, \dots, \vx_k \} \subset \cX$ such that there is some Markov chain transition function $P : [k] \to \Delta([k])$, and stationary distribution $\mu$ of $P$, such that
\begin{align}
    \E_{i \sim \mu} \norm*{\vx_i - \E_{j \sim P(i)} f(\vx_j)}_2 \le \eps.
\end{align}
\end{definition}

Using duality, it turns out that the previous two definitions are equivalent.

\begin{theorem}\label{th:fixed set equivalence}
    The above two definitions are equivalent. That is, the support of an $\eps$-convex EFP is also an $\eps$-fixed set; conversely, given an $\eps$-fixed set, the corresponding stationary distribution $\mu$ is an $\eps$-convex EFP.
\end{theorem}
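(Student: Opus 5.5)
We prove the two directions separately. The direction from fixed sets to convex EFPs is a direct Jensen-plus-stationarity argument. The converse is a finite-dimensional duality argument, which produces either a good transition matrix or a convex $1$-Lipschitz witness $Q$ violating the convex EFP condition.

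\emph{Fixed set $\Rightarrow$ convex EFP.} Let $S=\{\vx_1,\dots,\vx_k\}$, $P$ and $\mu$ be as in \Cref{def:fixed set}, and let $Q$ be convex and $1$-Lipschitz under $\norm{\cdot}_2$. For each $i$ we apply two bounds in turn:
\begin{align}
Q(\vx_i) \le Q\bigl(\E_{j\sim P(i)} f(\vx_j)\bigr) + \norm*{\vx_i-\E_{j\sim P(i)} f(\vx_j)}_2 \le \E_{j\sim P(i)} Q(f(\vx_j)) + \norm*{\vx_i-\E_{j\sim P(i)} f(\vx_j)}_2 .
\end{align}
The first inequality is the Lipschitz property and the second is Jensen's inequality. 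Averaging over $i\sim\mu$, stationarity gives $\E_{i\sim\mu}\E_{j\sim P(i)} Q(f(\vx_j)) = \E_{j\sim\mu} Q(f(\vx_j))$. Hence $\E_\mu[Q(\vx)-Q(f(\vx))]\le\eps$.

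\emph{Convex EFP $\Rightarrow$ fixed set.} Let $\mu$ have support $\vx_1,\dots,\vx_k$ with weights $\mu_i>0$. A pair $(P,\mu)$ with $\mu$ stationary is the same thing as a coupling $\pi\in\Pi(\mu,\mu)$, that is, a nonnegative matrix with both marginals equal to $\mu$, via $\pi_{ij}=\mu_i P(i,j)$. The objective then becomes
\begin{align}
\Phi(\pi)=\sum_i \norm*{\mu_i\vx_i-\textstyle\sum_j \pi_{ij} f(\vx_j)}_2 .
\end{align}
This is convex in $\pi$ over the compact polytope $\Pi(\mu,\mu)$. We write each norm in dual form, $\norm{\vz}_2=\max_{\vv\in B}\ip{\vv,\vz}$, where $B$ is the appropriate dual ball; after rescaling, a linear function $\ip{\vv,\cdot}$ with $\vv\in B$ is $1$-Lipschitz under the normalized norm. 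The function
\begin{align}
\sum_i\ip{\va_i,\mu_i\vx_i-\textstyle\sum_j\pi_{ij}f(\vx_j)}
\end{align}
is bilinear in $\pi$ and $(\va_1,\dots,\va_k)\in B^k$, and both domains are compact and convex. By the minimax theorem,
\begin{align}
\min_\pi \Phi(\pi)=\max_{\va}\Bigl[\textstyle\sum_i\mu_i\ip{\va_i,\vx_i}-\max_{\pi\in\Pi(\mu,\mu)}\sum_{i,j}\pi_{ij}c_{ij}\Bigr],\qquad c_{ij}=\ip{\va_i,f(\vx_j)} .
\end{align}

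The key step, and the one I expect to be the crux, is converting the inner optimal-transport term into a convex test function $Q$. By LP (Kantorovich) duality, for fixed $\va$ the inner maximum equals $\min\{\sum_i\mu_i(\phi_i+\psi_i):\phi_i+\psi_j\ge c_{ij}\}$. We may take $\psi_j=\max_i(c_{ij}-\phi_i)$. Define
\begin{align}
Q(\vy)=\max_i\bigl(\ip{\va_i,\vy}-\phi_i\bigr),
\end{align}
a maximum of $1$-Lipschitz affine functions and hence convex and $1$-Lipschitz. Then $Q(f(\vx_j))=\psi_j$ and $Q(\vx_i)\ge\ip{\va_i,\vx_i}-\phi_i$. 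Therefore the bracketed quantity is at most $\sum_i\mu_iQ(\vx_i)-\sum_j\mu_jQ(f(\vx_j))\le\eps$, using the convex EFP property.

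It follows that $\min_\pi\Phi(\pi)\le\eps$. Take a minimizer $\pi$, which exists since $\Pi(\mu,\mu)$ is compact and $\Phi$ is continuous, and set $P(i,j)=\pi_{ij}/\mu_i$. Then $\mu$ is stationary for $P$, and $\sum_i \mu_i\norm{\vx_i-\E_{j\sim P(i)}f(\vx_j)}_2=\Phi(\pi)\le\eps$, as required. The remaining care is only in the norm normalization, so that the dual ball corresponds exactly to $1$-Lipschitz linear functionals under $\norm{\cdot}_2$.
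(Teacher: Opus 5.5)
Your proof is correct and follows the paper's argument. The forward direction uses Lipschitzness, Jensen and stationarity. The converse uses convex duality on the coupling program to produce a max-of-affine test function $Q$. The only difference is presentation: you split the duality into a minimax step followed by Kantorovich LP duality, whereas the paper dualizes the convex program in one step. The paper's dual variables $b_i$ and $c_j$ are your potentials $-\phi_i$ and $\psi_j$, with $b_i$ shifted by the affine term $\tfrac1d\langle \va_i,\vx_i\rangle$.
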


\begin{proof}
    Let $S$ be an $\eps$-fixed set, $P$ its corresponding transition matrix, and $\mu$ be its stationary distribution. Then for any convex $1$-Lipschitz $Q$, we have
    \begin{align}
        \E_{\vx} \ab[Q(\vx) - Q(f(\vx))] &= \E_\vx Q(\vx) - \E_\vx \E_{\vx'\mid\vx} Q(f(\vx'))
    \le \E_\vx \ab[Q(\vx) - Q\ab\Big(\E_{\vx'\mid\vx}  f(\vx'))]
        \le \E_\vx \norm[\Big]{\vx - \E_{\vx'\mid\vx} f(\vx')}_2 \le \eps,
    \end{align}
    where $\vx\sim\mu$ and $\vx'\mid\vx$ means $\vx'\sim P_\vx$. The first equality uses stationarity of $\mu$ under $P$. The inequalities are, in order, Jensen and Lipschitzness of $Q$.

    Conversely, let $\mu$ be an $\eps$-convex EFP with support $\{\vx_1, \dots, \vx_k\}$. We need to show that there is a transition matrix $P$ satisfying the fixed set condition. To do this, consider the following convex program.
    \begin{align}
        \min_{\mW \ge \vec 0} \quad \sum_i \norm*{\sum_j W_{ij} (\vx_i - f(\vx_j))}_2 \qq{s.t.} \mW^\top \vec 1 = \mW \vec 1 = \mu.
    \end{align}
    It suffices to show that the value of this program is $\le \eps$, because then it would be enough to use the transition matrix $P(j\mid i) = W_{ij}/\mu_i$.
    Using $\norm{\vv}_2=\max_{\norm{\va}_2\le1}\ip{\va,\vv}/d$, the dual of this problem is
    \begin{align}
        \max \E_{i \sim \mu} [b_i - c_i] \qq{s.t.} \frac1d\ip*{\va_i,\vx_i - f(\vx_j)} \ge b_i - c_j, \quad \norm{\va_i}_2 \le 1.
    \end{align}
    For any dual-feasible solution, we consider the function $$Q(\vx) = \max_i \{ b_i + \tfrac1d\ip{\va_i, \vx - \vx_i} \}.$$
    Then $Q$ is $1$-Lipschitz and convex. Moreover, we have $Q(\vx_i) \ge b_i$ and $Q(f(\vx_i)) \le c_i$,
    so
    \begin{align}
        \E_{i \sim \mu} [b_i - c_i] \le \E_{i \sim \mu} \ab[Q(\vx_i) - Q(f(\vx_i))] \le \eps. \tag*\qedhere
    \end{align}
\end{proof}

\begin{remark}
    An alternative way to see this equivalence is through Strassen's theorem~\citep{Gozlan17:Kantorovich}, which asserts that for probability measures $\mu,\nu$ on $\R^d$ with finite first moments, $\int Q\,d\mu\le\int Q\,d\nu$ for every convex and Lipschitz $Q$ if and only if they admit a coupling $(X,Y)$ with $\E[Y\mid X]=X$. The approximate version~\citep[Remark 3.4]{Gozlan17:Kantorovich} states that
    \begin{align}
        \int Q\,d\mu-\int Q\,d\nu\le\eps
        \quad\text{for every convex $1$-Lipschitz $Q$}
    \end{align}
    if and only if there is a coupling $X\sim\mu$, $Y\sim\nu$ satisfying
    \begin{align}
        \E\norm{X-\E[Y \mid X]}_2\le\eps.
    \end{align}
    To apply this to \Cref{th:fixed set equivalence}, let $\nu$ be the law of $f(X)$ for $X\sim\mu$. The convex EFP condition gives the coupling above. Lift $Y$ to $X'\sim\mu$ by sampling from $\mu$ conditional on $f(X')=Y$, independently of $X$ given $Y$. Then $f(X')=Y$, and $P(\vx'\mid\vx)=\Pr(X'=\vx'\mid X=\vx)$ has stationary distribution $\mu$, since both $X$ and $X'$ have law $\mu$. The coupling inequality is precisely the $\eps$-fixed-set condition. The converse follows directly from Jensen's inequality and Lipschitzness.
\end{remark}

\begin{remark}[Contractions]
    If $f$ is $L$-Lipschitz for $L<1$, with unique fixed point $\vx_*$, then every $\eps$-fixed set satisfies
    \begin{align}
        \E_{i \sim \mu} \norm{\vx_i - \vx_*}_2 \le \frac{\eps}{1-L}.
    \end{align}
    Indeed, applying the convex EFP condition to $Q(\vx) = \norm{\vx - \vx_*}_2$ gives
    \begin{align}
        (1-L)\E_{i \sim \mu} \norm{\vx_i - \vx_*}_2
        \le \E_{i \sim \mu}\ab[Q(\vx_i) - Q(f(\vx_i))] \le \eps.
    \end{align}
    In particular, for an exact fixed set, the stationary distribution $\mu$ is concentrated on $\vx_*$.
\end{remark}

\subsection{Correlated equilibria and fixed sets}

The following reduction connects correlated equilibria to fixed sets. It follows the proof of~\citet[Theorem~3.1]{Anagnostides26:Complexity}, but we include the argument for completeness. 

\begin{proposition}[Reducing fixed sets to correlated equilibria]
\label{prop:ce-fixed-set reduction}
    Given a map $f:[0,1]^d\to[0,1]^d$, we construct the two-player concave game on $\cX_1=\cX_2=[0,1]^d$ with utilities
    \begin{align}
        u_1(\vx,\vy)=-\norm{\vx-\vy}_2^2, \quad
        u_2(\vx,\vy)=-\norm{\vy-f(\vx)}_2^2.
    \end{align}
    For every $\eps\ge0$ and every $\eps$-CE $\mu$ of this game, $S=\supp\mu_\vx$ is a $2\sqrt{\eps}$-fixed set of $f$ with $|S|\le|\supp\mu|$, with a transition matrix whose stationary distribution is $\mu_\vx$.
\end{proposition}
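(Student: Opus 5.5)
Let $\mu$ be an $\eps$-CE with finite support, and write $\mu_\vx$, $\mu_\vy$ for its marginals on the two players' strategies. The idea is to use each player's deviation guarantee with the \emph{best-response} swap function, since the quadratic utilities make the regret of each player equal to an explicit squared distance. Player~2 will force $\vy$ to be close to $\E[f(\vx)\mid\vy]$, and player~1 will force $\vx$ to be close to $\E[\vy\mid\vx]$. Chaining these conditional expectations gives a two-step kernel on $\supp\mu_\vx$ with the required approximate fixed-set property.

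\textbf{Step 1 (player 2).} For each $\vy\in\supp\mu_\vy$, let $\phi_2(\vy)=\bar f(\vy)\defeq\E_\mu[f(\vx)\mid\vy]$, which lies in $[0,1]^d$ by convexity. By the bias--variance identity,
\begin{align}
\E[\norm{\vy-f(\vx)}_2^2\mid\vy]=\norm{\vy-\bar f(\vy)}_2^2+\E[\norm{f(\vx)-\bar f(\vy)}_2^2\mid\vy].
\end{align}
So the regret of this deviation is exactly $\E_\mu\norm{\vy-\bar f(\vy)}_2^2\le\eps$.

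\textbf{Step 1 (player 1).} Symmetrically, take $\phi_1(\vx)=\bar\vy(\vx)\defeq\E_\mu[\vy\mid\vx]$. This gives $\E_\mu\norm{\vx-\bar\vy(\vx)}_2^2\le\eps$.

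\textbf{Step 2 (the chain).} Define the kernel $P$ on $S=\supp\mu_\vx$ by: from $\vx$, sample $\vy\sim\mu(\cdot\mid\vx)$, then sample $\vx'\sim\mu(\cdot\mid\vy)$. This is the standard forward--backward chain, so it is reversible and has stationary distribution $\mu_\vx$; its support has size at most $|\supp\mu|$. By construction,
\begin{align}
\E_{\vx'\sim P(\vx)}f(\vx')=\E[\bar f(\vy)\mid\vx].
\end{align}

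\textbf{Step 3 (bound).} By the triangle inequality and Jensen's inequality,
\begin{align}
\E_{\vx\sim\mu_\vx}\norm{\vx-\E[\bar f(\vy)\mid\vx]}_2\le\E\norm{\vx-\bar\vy(\vx)}_2+\E\norm{\vy-\bar f(\vy)}_2.
\end{align}
Here we used $\norm{\bar\vy(\vx)-\E[\bar f(\vy)\mid\vx]}_2\le\E[\norm{\vy-\bar f(\vy)}_2\mid\vx]$, again by Jensen. Finally, Cauchy--Schwarz converts each squared bound of $\eps$ from Step~1 into a bound of $\sqrt\eps$, for a total of $2\sqrt\eps$.

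The only delicate point is Step~2: we must check that the two-step kernel preserves $\mu_\vx$ and that its image expectation is indeed $\E[\bar f(\vy)\mid\vx]$. Both follow directly from the tower property under the joint law $\mu$.
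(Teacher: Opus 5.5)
Your proposal is correct and follows the paper's proof essentially step for step: the same two squared-distance obedience bounds, the same forward--backward kernel $P(\vx'\mid\vx)=\sum_\vy\mu(\vy\mid\vx)\mu(\vx'\mid\vy)$ with stationary distribution $\mu_\vx$, and the same chain of triangle inequality, Jensen, and Cauchy--Schwarz. The only difference is that you derive those squared bounds explicitly from the best-response swap functions and the bias--variance identity, whereas the paper simply states that the CE conditions imply them.
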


\begin{proof}
    Each utility is concave and quadratic in its player's own strategy. The CE conditions imply
    \begin{align}\label{eq:ce-condition}
        \E_\vx \norm*{\vx - \E_{\vy\mid\vx} \vy}_2^2 \le \eps, \quad \E_\vy \norm*{\vy - \E_{\vx\mid\vy} f(\vx)}_2^2 \le \eps,
    \end{align}
    where $\E_\vx$ and $\E_\vy$ denote sampling from the marginals of $\mu$, and $\E_{\vx\mid\vy}$ and $\E_{\vy\mid\vx}$ denote sampling from the conditional distributions of $\mu$ given $\vy$ and $\vx$, respectively. Then
    \begin{align}
        \E_\vx \norm[\Big]{\vx - \E_{\vy\mid\vx, \vx'\mid\vy} f(\vx')}_2
        &\le \E_{\vx} \ab[\norm[\Big]{\vx - \E_{\vy\mid\vx} \vy}_2 + \norm[\Big]{\E_{\vy\mid\vx} \ab\big[\vy - \E_{\vx'\mid\vy} f(\vx')]}_2 ]
        \\&\le \E_\vx \ab[\norm[\Big]{\vx - \E_{\vy\mid\vx} \vy}_2 + \E_{\vy\mid\vx} \norm[\Big]{ \vy - \E_{\vx'\mid\vy} f(\vx')}_2]
        \\&\le \ab(\sqrt{\E_\vx \ab[\norm[\Big]{\vx - \E_{\vy\mid\vx} \vy}_2^2]} + \sqrt{\E_{\vy} \ab[\norm[\Big]{ \vy - \E_{\vx'\mid\vy} f(\vx')}_2^2]})
        \\&\le 2 \sqrt{\eps},
    \end{align}
    where we use, in order, the triangle inequality, Jensen, Cauchy-Schwarz, and finally \eqref{eq:ce-condition}.
    Set $S=\supp\mu_\vx$ and $P(\vx'\mid\vx)=\sum_\vy\mu(\vy\mid\vx)\mu(\vx'\mid\vy)$. This transition matrix has stationary distribution $\mu_\vx$, since
    \begin{align}
        \sum_\vx\mu_\vx(\vx)P(\vx'\mid\vx)
        =\sum_\vy\mu_\vy(\vy)\mu(\vx'\mid\vy)
        =\mu_\vx(\vx').
    \end{align}
    This is precisely the $2\sqrt{\eps}$-fixed-set condition.
\end{proof}

%% file: text/amplification.tex
\section{Hardness of computing fixed sets}
\label{sec:hardness-fixedsets}

Let $f : [0, 1]^d \to [0, 1]^d$ be $L$-Lipschitz. It is known that computing a $\delta$-fixed point of $f$ is \PPAD-hard, even when $L\ge1$ and $0<\delta\le1$ are absolute constants~\citep{Rubinstein16:Settling}. Our goal is to prove the following.

\fixedsethardness*

The proof essentially makes the argument described in our technical overview quantitative. The remainder of this section is dedicated to its proof. To begin with, we construct the map 
\[
    g(\vx^{(1)}, \dots, \vx^{(r)}) = \ab(f_{\alpha_1}(\vx^{(1)}), \dots, f_{\alpha_r}(\vx^{(r)})),
\]
where the parameters $\alpha_1, \dots, \alpha_r$ will be chosen later. We let $S = \{ \vx_1, \dots, \vx_k \}$ be an $\eps$-fixed set of $g$ with $k\le K$, for a support bound $K \geq 2$ and precision $\eps > 0$ to be chosen later, and let $P$ and $\mu$ be the corresponding transition matrix and stationary distribution. In lieu of passing to the limit, we will select a layer $\ell \in [r]$ at which $\vx^{(\ell)}$ will have the same function as the limit $\vx$ in the proof of \Cref{prop:non-quantitative}. Our goal is to show that there is an $i \in [k]$ and $\ell \in [r]$ such that $\vx_i^{(\ell)}$ is a $\delta$-fixed point of $f$.

For a parameter $0<t\le1$, we define
\begin{align}
    P(t) \defeq \E_{a \sim \Geo(t)} P^a = t \sum_{a=0}^\infty (1-t)^a P^a = t(\mI-(1-t)P)^{-1}, \quad h(t) \defeq \E_{i \sim \mu} H(P_i(t))
\end{align}
where $H$ is entropy with natural logarithms and $P(1)=\mI$.
\begin{lemma}
    $h$ is nonincreasing.
\end{lemma}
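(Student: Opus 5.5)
The plan is to show that for $0<s<t\le1$ the smoothed matrix $P(s)$ factors as $R\,P(t)$ for some stochastic matrix $R$ that is a power series in $P$. Then $h(s)\ge h(t)$ will follow from concavity of entropy together with stationarity of $\mu$.

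\emph{Step 1 (subordination of geometric laws).} I would compare generating functions. With $\phi_t(z)=t/(1-(1-t)z)$ the probability generating function of $\Geo(t)$, we have $P(t)=\phi_t(P)$, which converges since $\norm{P}_\infty=1$ and $1-t<1$. Consider the ratio
\begin{align}
    \frac{\phi_s(z)}{\phi_t(z)}=\frac{s}{t}\cdot\frac{1-(1-t)z}{1-(1-s)z}=\frac{s}{t}\ab(1+\frac{(t-s)z}{1-(1-s)z}).
\end{align}
Since $t>s$, this is a power series in $z$ with nonnegative coefficients, and its value at $z=1$ is $\frac{s}{t}\cdot\frac{t}{s}=1$. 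Hence it is the generating function of a probability distribution $D$ on $\mathbb{Z}_{\ge0}$. Concretely, $D$ puts mass $s/t$ on $0$ and otherwise is distributed as $1+\Geo(s)$.

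Set $R\defeq\E_{b\sim D}P^b$. This is a stochastic matrix, and it commutes with $P$, so $P(s)=R\,P(t)=P(t)\,R$. The identity holds entrywise because every series involved converges absolutely. The only point needing care is that all coefficients are nonnegative, which is what makes $R$ a genuine Markov kernel.

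\emph{Step 2 (concavity).} Row-wise, $P_i(s)=\sum_j R_{ij}P_j(t)$ is a convex combination of the rows $P_j(t)$. By concavity of Shannon entropy,
\begin{align}
    H(P_i(s))\ge\sum_j R_{ij}H(P_j(t)).
\end{align}

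\emph{Step 3 (stationarity).} Since $\mu^\top P=\mu^\top$, we have $\mu^\top P^b=\mu^\top$ for every $b$, and therefore $\mu^\top R=\mu^\top$. Averaging the inequality of Step 2 over $i\sim\mu$ gives
\begin{align}
    h(s)\ge\sum_j(\mu^\top R)_jH(P_j(t))=h(t).
\end{align}
The case $t=1$, where $P(1)=\mI$ and $h(1)=0$, is consistent with this bound and needs no separate treatment.

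I expect the main obstacle to be Step 1: recognizing that $\Geo(s)$ decomposes as an independent sum of $\Geo(t)$ and a nonnegative random variable, and verifying that the required coefficients are nonnegative. Once this factorization is in hand, the rest is a standard data-processing argument.
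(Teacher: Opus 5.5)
Your proof is correct and is essentially the paper's argument: your kernel $R=\frac{s}{t}\mI+\bigl(1-\frac{s}{t}\bigr)P\,P(s)$ is exactly the paper's $\mK$ (with the roles of $s$ and $t$ interchanged), and your concavity-of-entropy step is the inequality form of the paper's exact identity, which writes the entropy gap as an average of KL divergences over $i\sim\mu$, $j\sim\mK_i$. Your generating-function computation is a welcome verification of the factorization that the paper only asserts; note, however, that the paper keeps the exact KL form because it is reused later to bound $V(t)-V(s)$.
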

\begin{proof}
    For $0<t<s\le1$, the stochastic matrix
    \begin{align}
        \mK(t,s)=\frac ts \mI+\ab(1-\frac ts)P P(t)
    \end{align}
    satisfies $P(t)=\mK(t,s)P(s)=P(s)\mK(t,s)$ and has stationary distribution $\mu$. But then we have
    \begin{align}
        h(t) - h(s) &= \E_{i \sim \mu} H(P_i(t)) - \E_{j \sim \mu}H(P_j(s))
        \\&= \E_{i \sim \mu, j \sim \mK_i} \ab[H(P_i(t)) - H(P_j(s))] \notag
        \\&= \E_{i \sim \mu, j \sim \mK_i} \KL(P_j(s) \lVert P_i(t)) \ge 0, \label{eq:entropy}
    \end{align}
    where $\KL(\cdot \lVert \cdot)$ denotes the KL divergence.
\end{proof}
Moreover, $P_i(t)$ is a distribution on $k$ items, so $h(t)$ is bounded in $[0, \log k]$. Thus, if we pick values $1 \eqdef t_0 > t_1 > t_2 > \dots > t_r > 0$, there will be some value $\ell \le r$ at which $$h(t_{\ell}) - h(t_{\ell-1}) \le \frac{\log k}{r}.$$
The use of entropy here is crucial, since we will later need the fact that the above bound is logarithmic in $k$. We fix this value of $\ell$, and for the rest of the proof, let $\vx_i \defeq \vx^{(\ell)}_i, \alpha \defeq \alpha_\ell, t \defeq t_\ell$.

\paragraph{Controlling the conditional variance.} For any $s \in [t, 1]$, we define
\begin{align}
    \bar{\vx}_i(s) \defeq \E_{j \sim P_i(s)} \vx_j, \quad V(s) \defeq \E_{i \sim \mu, j \sim P_i(s)} \norm{\vx_j - \bar{\vx}_i(s)}_2^2.
\end{align}
The quantity $V(s)$ is the variance of $\vx_j$ under $j\sim P_i(s)$, averaged over $i\sim\mu$; our goal is to show that $V(t)$ is small, so that these conditional distributions have small variance on average. We write
\begin{align}
    \vx_j = \E_{a \sim P_j}[ \vx_a + \alpha \Delta_a] + \ve_j,
\end{align}
where $\Delta_j = f(\vx_j) - \vx_j$. For a vector $\vv\in\R^{dr}$ with $d$-dimensional blocks, the normalized norms satisfy $\norm{\vv^{(\ell)}}_2\le\sqrt r\,\norm{\vv}_2$, so the mean error in this layer is at most $\eta \defeq \sqrt r\,\eps$ if $\norm{\vv} \leq \epsilon$. Thus,
\[
  \E_{j \sim \mu} \norm{\ve_j}_2 \le \eta.  
\]
For any $i$, taking expectations over $P_i(s)$ gives
\begin{align}\label{eq:inverse-identity}
    \E_{j \sim P_i(s),a \sim P_j}[ \alpha \Delta_a + \ve_j] &= \E_{j \sim P_i(s), a \sim P_j} [\vx_j - \vx_a] = \ip{P_i(s)(\mI-P), \vx},
\end{align}
where we use the notation $\ip{\mu, \vx} \defeq \sum_{i=1}^k \mu_i \vx_i$.

For $s<1$, recall that $P(s) = s(\mI - (1-s)P)^{-1}$.
Rearranging, we have
\begin{align}\label{eq:Ps-identity}
    P(s)(\mI-P) = \frac{s}{1-s}(\mI - P(s))
\end{align}
Substituting into \eqref{eq:inverse-identity} gives
\begin{align}\label{eq:error}
    \E_{j \sim P_i(s),a \sim P_j}[ \alpha \Delta_a + \ve_j] = \frac{s}{1-s}(\vx_i - \ip{P_i(s), \vx}) = \frac{s}{1-s}(\vx_i - \bar{\vx}_i(s)).
\end{align}
By triangle inequality, and taking expectations over $\mu$, it follows that
\begin{align}
    \E_{i \sim \mu} \norm{\vx_i - \bar{\vx}_i(s)}_2 \le \frac{1-s}{s} \E_{i \sim \mu, j \sim P_i(s),a \sim P_j} \ab[ \alpha \norm{\Delta_a}_2 + \norm{\ve_j}_2] \le \frac{1}{s}(\alpha + \eta) \label{eq:expected error bound}
\end{align}
where in the last inequality we use the trivial bounds $\norm{\Delta_a}_2 \le 1$ and $s \le 1$, as well as the fact that $\mu$ is stationary under $P(s)$.

Since $\bar{\vx}_i(s)$ is the mean of $\vx_j$ over $P_i(s)$, it minimizes the mean-squared error in $V(s)$. Therefore, we have
\begin{align}
    V(s) &\le \E_{i \sim \mu, j \sim P_i(s)} \norm{\vx_j - \vx_i}_2^2
    \\&= \frac{2}{d}\E_{i \sim \mu, j \sim P_i(s)} \ip{\vx_i - \vx_j, \vx_i}
    \\&= \frac{2}{d}\E_{i \sim \mu} \ip{\vx_i - \bar{\vx}_i(s), \vx_i}
    \\&\le 2 \E_{i \sim \mu} \norm{\vx_i - \bar{\vx}_i(s)}_2
    \\&\le \frac{2}{s}(\alpha + \eta).
\end{align}
where the first equality uses the fact that $i$ and $j$ have the same distribution (because $\mu$ is stationary), the second uses the definition of $\bar{\vx}_i(s)$, and the final two inequalities use Cauchy-Schwarz with $\norm{\vx_i}_2 \le 1$ and \eqref{eq:expected error bound}.

Now set $\mK \defeq \mK(t, s)$. Then
\begin{align}\label{eq:K transfer}
    \bar{\vx}_i(t) = \E_{j \sim P_i(t)} \vx_j = \E_{a \sim \mK_i, j \sim P_a(s)} \vx_j = \E_{j \sim \mK_i} \bar{\vx}_j(s).
\end{align}
Therefore,
\begin{align}
    V(t)-V(s) &= \E_{i\sim\mu}\ab[\norm{\bar{\vx}_i(s)}_2^2-\norm{\bar{\vx}_i(t)}_2^2]
    \\&= \E_{i \sim \mu, j \sim \mK_i} \norm{\bar{\vx}_j(s) - \bar{\vx}_i(t)}_2^2
    \\&= \E_{i \sim \mu, j \sim \mK_i} \norm*{\sum_a (P_{ja}(s) - P_{ia}(t)) \vx_a}_2^2
    \\&\le \E_{i \sim \mu, j \sim \mK_i} \ab[\ab(\sum_a \abs{P_{ja}(s) - P_{ia}(t)} \norm*{\vx_a}_2)^2]
    \\&\le \E_{i \sim \mu, j \sim \mK_i} \ab(\sum_a |P_{ja}(s)-P_{ia}(t)|)^2
    \\&\le 2\E_{i \sim \mu, j \sim \mK_i} \KL(P_j(s) \lVert P_i(t))
    \\&= 2(h(t)-h(s)).
\end{align}
The first equality uses stationarity and the variance identity; the second uses \eqref{eq:K transfer} and stationarity under $\mK$. The inequalities use the triangle inequality, $\norm{\vx_a}_2\le1$, and Pinsker's inequality for the unnormalized sum of absolute probability differences. The final equality is \eqref{eq:entropy}. For $s=1$, $V(1)=0$, so the preceding bound on $V(s)$ also holds at this endpoint.

Taking $t = t_\ell$ and $s = t_{\ell-1}$ therefore gives
\begin{align}
    V(t) \le \frac{2\log k}{r} + V(s) \le \frac{2\log k}{r} + \frac{2}{s}(\alpha + \eta).
\end{align}

\paragraph{Dividing by $\alpha$.} Starting from \eqref{eq:error}, taking $s=t$, rearranging, and dividing by $\alpha$ yields
\begin{align}
    \E_{j \sim P_i(t),a \sim P_j} \Delta_a =\frac{t}{\alpha(1-t)}(\vx_i - \bar{\vx}_i(t)) - \frac{1}{\alpha}\E_{j \sim P_i(t),a \sim P_j} \ve_j.
\end{align}
On the other hand, applying \eqref{eq:Ps-identity} to ($\Delta_1, \dots, \Delta_k$) gives
\begin{align}
    \E_{j \sim P_i(t)} \Delta_j - \E_{j \sim P_i(t), a \sim P_j} \Delta_a = \frac{t}{1-t} \ab(\Delta_i - \E_{j \sim P_i(t)} \Delta_j).
\end{align}
Adding and using the triangle inequality with $t\le1/2$, as ensured by our parameter choices below, and the bounds $\alpha\le1$ and $\norm{\vx}_2\le1$ for $\vx\in[0,1]^d$ gives
\begin{align}
    \norm*{\E_{j \sim P_i(t)} \Delta_j}_2 &\le \frac{2t}{\alpha}\norm{\vx_i - \bar{\vx}_i(t)}_2 + \frac{1}{\alpha}\E_{j \sim P_i(t),a \sim P_j}\norm{\ve_j}_2 + 2t \ab(\norm{\Delta_i}_2 + \E_{j \sim P_i(t)} \norm{\Delta_j}_2).
    \\&\le \frac{8t}{\alpha} + \frac{1}{\alpha}\E_{j \sim P_i(t),a \sim P_j}\norm{\ve_j}_2.
\end{align}
Averaging over $i \sim \mu$, we have
\begin{align}
    \E_{i \sim \mu} \norm*{\E_{j \sim P_i(t)} \Delta_j}_2 \le \frac{8t + \eta}{\alpha}\eqdef \gamma
\end{align}
This is the quantitative version of \eqref{eq:qualitative-efp}.

\paragraph{Using continuity of $f$.}
The final step is to use continuity of $f$ to conclude that we have found a fixed point. For independent draws $j,a\sim P_i(t)$, we have
\begin{align}
    \frac12 \E_{j, a \sim P_i(t)}\norm{\vx_j - \vx_a}_2^2 = \E_{j \sim P_i(t)} \norm{\vx_j - \bar{\vx}_i(t)}_2^2.
\end{align}
Here, $\bar{\vx}_i(t)$ is the mean of $\vx_j$ over $P_i(t)$. Therefore, we have
\begin{align}
    \E_{j \sim P_i(t)} \norm{\Delta_j}_2 &\le \sqrt{\E_{j \sim P_i(t)} \norm{\Delta_j}_2^2}
    \\&\le \norm*{\E_{j \sim P_i(t)}\Delta_j}_2 + \sqrt{\frac12 \E_{j, a \sim P_i(t)} \norm{\Delta_j - \Delta_a}_2^2}
    \\&\le \norm*{\E_{j\sim P_i(t)}\Delta_j}_2 + 2L\sqrt{\frac12 \E_{j, a \sim P_i(t)} \norm{\vx_j - \vx_a}_2^2}
    \\&\le \norm*{\E_{j\sim P_i(t)}\Delta_j}_2 + 2L\sqrt{\E_{j \sim P_i(t)} \norm{\vx_j - \bar{\vx}_i(t)}_2^2}.
\end{align}
where we use, in order: Cauchy-Schwarz; the inequality $\sqrt{a + b} \le \sqrt{a} + \sqrt{b}$ together with variance decomposition; Lipschitzness of $f$ with the bound $L \ge 1$; and finally the previous identity.
Taking expectations over $\mu$, using the fact that $\mu$ is stationary, and applying Jensen now yields
\begin{align}\label{eq:amplification residual}
\E_{i \sim \mu} \norm{\Delta_i}_2 &\le \gamma + 2L\sqrt{V(t)} \le \frac{8t + \eta}{\alpha} + 2L\sqrt{\frac{2\log k}{r} + \frac{2(\alpha + \eta)}{s}}.
\end{align}
We now take $\alpha_\ell=b^{-\ell}$ and set $t_\ell=\delta\alpha_\ell/64$, for a base $b\ge2$ to be chosen below. This makes $8t/\alpha=\delta/8$. Since $s=t_{\ell-1}$ and $t_0=1$, we also have $\alpha/s\le64/(\delta b)$ and $1/s\le64b^{r-1}/\delta$. Substituting in \eqref{eq:amplification residual} gives
\begin{align}
    \E_{i\sim\mu}\norm{\Delta_i}_2
    &\le\frac\delta8+\eta b^r
      +2L\sqrt{\frac{2\log k}{r}+\frac{128}{\delta b}+\frac{128\eta b^{r-1}}\delta}.
\end{align}
The remaining parameter choices bound the three terms under the square root. Taking $r=(1+\ceil{\log_2 K})\ceil{128(L+1)^2/\delta^2}$ bounds the entropy term by $\delta^2/(64(L+1)^2)$, since $k\le K$. Taking $b=\ceil{\max\{2,2^{14}(L+1)^2/\delta^3\}}$ bounds the next term by $\delta^2/(128(L+1)^2)$. Finally, we choose $\eps=\delta/(16rb^r)$. Recalling that $\eta=\sqrt r\,\eps$, this gives $\eta b^r\le\delta/16$ and bounds the last term under the square root by $\delta^2/(2048(L+1)^2)$. Thus,
\begin{align}
    \E_{i\sim\mu}\norm{\Delta_i}_2
    &\le\frac{3\delta}{16}
      +2L\sqrt{\frac{\delta^2}{64(L+1)^2}+\frac{\delta^2}{128(L+1)^2}+\frac{\delta^2}{2048(L+1)^2}}
      <\frac\delta2.
\end{align}
This shows that a supported point in this layer has residual less than $\delta/2$. For fixed $L,\delta$, we have $r=O(\log K)$ and $b = \Theta(1)$, so $\epsilon = \delta/(16 r b^r) \geq K^{-c_0}$ for a sufficiently large absolute constant $c_0$. The construction takes time polynomial in the input and $\log K$, and extracting a fixed point takes time polynomial in the returned support encoding. To obtain the formulation stated in the theorem, for a fixed $C \geq 1$, we take $K = d^{2 \lceil C \rceil}$.  The constructed map $g$ has dimension $D=dr=O(d\log d)$, so $D^C\le K$ for all sufficiently large $d$. Taking $c=2\ceil{C}c_0$, we also have $D^{-c}\le d^{-c}=K^{-c_0}\le\eps$. This proves~\Cref{th:fixed set hardness}.

\begin{remark}[Beyond explicit support]
    \label{rem:sampling representations}
    Instead of explicitly listing each point in the support, one may instead take the output of the algorithm to be a probabilistic Turing machine, which samples a stationary distribution $\mu$ corresponding to an $\eps$-fixed set. For a support bound $K\ge 2$ and accuracy $\eps=K^{-c}$ for a sufficiently large absolute constant $c$, the proof of \Cref{th:fixed set hardness}, applied with accuracy $\delta/2$, uses $O(\log K)$ layers and, with suitable parameter choices, gives $\E_{\vx\sim\mu}\norm{f(\vx^{(\ell)})-\vx^{(\ell)}}_2\le\delta/4$ for some layer $\ell$. By Markov's inequality, two independent samples contain a point with residual at most $\delta/2$ in that layer with probability at least $3/4$. Thus, we can test every layer to obtain a $\delta$-fixed point with probability at least $3/4$, without access to the transition matrix. As a result, the same proof establishes hardness under randomized reductions even for superpolynomial support, provided $\log K$ is polynomial in the input size.
\end{remark}

\begin{remark}\label{rem:fixed-set-fptas}
    The hardness in \Cref{th:fixed set hardness} applies when the accuracy is small enough relative to a prescribed support bound. Without that restriction, approximate fixed sets admit a simple FPTAS: start from any point $\vx_1=\mathbf{0}$, and at each step, set $\vx_{t+1} = f(\vx_t)$. With $O(1/\epsilon)$ iterations, the uniform distribution over that sequence of points is an $\epsilon$-fixed set.
\end{remark}

%% file: text/multilinear-games.tex
\section{Hardness of Correlated Equilibria in Multilinear Games}
\label{sec:polytope}

We next consider games in which each strategy set $\cX_i$ is a compact rational polytope, given by an explicit list of linear inequalities, and each utility is an explicit multilinear polynomial. Our goal is to prove~\Cref{th:multilinear hardness}, which we restate below.

\polytopemult*

The basic idea is to implement quadratic imitation utilities using linear costs on a larger strategy polytope, through the use of auxiliary coordinates. We then use a constant number of additional players to evaluate a polynomial fixed-point map through a multilinear game, using an additional gadget to enforce agreement among the copies.

\subsection{Exponential precision}

\begin{proof}[Proof of \Cref{th:multilinear hardness}(i)] To begin with, we convert the hard fixed-point problem into one with a polynomial function. This step is standard, but we include it for completeness.

\paragraph{A polynomial map.} It is known that satisfying a generalized circuit with fan-out two is \PPAD-hard at some constant tolerance $0<\tau<1/4$~\citep{Rubinstein15:Inapproximability}. Let $d$ be its number of nodes and $n \ge d$ its encoding length. Each gate will be replaced by a continuous function of its inputs. For arithmetic gates, we use the clipping operation to make sure that the output stays in $[0,1]$. For Boolean and comparison gates, we write $a,b\in[0,1]$ for the input values, and define
\begin{align}
    R_\tau(x) = \clip_{[0,1]}\ab(\frac{x-\tau}{1-2\tau}).
\end{align}
This rescaling ensures that approximately Boolean inputs yield exact Boolean outputs, leaving some slack for the subsequent approximation. We will use the following gates.\footnote{\citet{FilosRatsikas23:Complexity} have shown that $\PPAD$-hardness holds with only (truncated) addition and complement gates; that simplification is not important for our purposes here.}
\begin{itemize}
    \item Comparison: $h_{<}(a,b) = \clip_{[0,1]}\ab(\frac12+\frac{b-a}{2\tau})$.
    \item OR: $h_{\mathrm{OR}}(a,b) = R_\tau(\max\{a,b\})$.
    \item AND: $h_{\mathrm{AND}}(a,b) = R_\tau(\min\{a,b\})$.
    \item NOT: $h_{\mathrm{NOT}}(a) = 1-R_\tau(a)$.
\end{itemize}
For a constant $\lambda \ge1$, each gate map satisfies $|h_i(a,b)-h_i(a',b')|\le\lambda(|a-a'|+|b-b'|)$, and each variable occurs in at most three input positions. We will use a sufficiently accurate constant-degree polynomial approximation of each $h_i$, making sure that it preserves the range $[0,1]$ and has bounded derivatives. In particular, we make use of the Bernstein approximation~\citep{Lorentz12:Bernstein}:
\begin{align}
    f_i(a,b) = \E_{\xi_a \sim \operatorname{Bin}(p,a),\,\xi_b \sim \operatorname{Bin}(p,b)} h_i(\xi_a/p,\xi_b/p),
\end{align}
where the binomial variables above are independent. We will use the following basic properties of the Bernstein approximation~\citep{Lorentz12:Bernstein}; we include the short proofs for completeness.

\begin{claim}
    \label{cl:bernstein approximation}
    For every integer $p\ge1$, $f_i$ is a polynomial of degree at most $p$ in each input (for a total degree at most $2p$). For all $(a,b)\in[0,1]^2$,
    \begin{align}
        0\le f_i(a,b)\le1,\quad
        |f_i(a,b)-h_i(a,b)|\le\frac{\lambda}{\sqrt p},\quad
        |\partial_a f_i(a,b)|,\ |\partial_b f_i(a,b)|\le\lambda.
    \end{align}
\end{claim}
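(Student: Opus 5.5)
Each claimed property follows from the probabilistic form of the Bernstein operator, so I would treat them one at a time using the binomial representation.

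\emph{Polynomiality and range.} Expanding the expectation gives
\begin{align}
    f_i(a,b)=\sum_{u=0}^p\sum_{v=0}^p \binom pu\binom pv a^u(1-a)^{p-u}b^v(1-b)^{p-v}\,h_i(u/p,v/p).
\end{align}
This is visibly a polynomial of degree at most $p$ in each of $a$ and $b$. For $(a,b)\in[0,1]^2$ the weights are nonnegative and sum to one, so $f_i(a,b)$ is a convex combination of values of $h_i\in[0,1]$. Hence $0\le f_i\le1$. I would note that every gate map, including the clipped arithmetic gates, takes values in $[0,1]$, so this applies uniformly.

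\emph{Approximation error.} I would write
\begin{align}
    |f_i(a,b)-h_i(a,b)|\le\E\,|h_i(\xi_a/p,\xi_b/p)-h_i(a,b)|\le\lambda\ab(\E|\xi_a/p-a|+\E|\xi_b/p-b|).
\end{align}
By Jensen, $\E|\xi_a/p-a|\le\sqrt{a(1-a)/p}\le 1/(2\sqrt p)$, and the same holds for $b$. The two terms together give $\lambda/\sqrt p$.

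\emph{Derivative bound.} This is the only step needing care. I would use the standard formula for the derivative of a Bernstein polynomial in one variable, with $b$ (more precisely $\xi_b$) held fixed:
\begin{align}
    \partial_a f_i(a,b)=p\sum_{u=0}^{p-1}\binom{p-1}{u}a^u(1-a)^{p-1-u}\,\E_{\xi_b}\ab[h_i\ab(\tfrac{u+1}p,\tfrac{\xi_b}p)-h_i\ab(\tfrac up,\tfrac{\xi_b}p)].
\end{align}
To derive it, I would differentiate the basis polynomials and reindex. Each bracket is at most $\lambda/p$ in absolute value by the Lipschitz property of $h_i$, and the weights form a probability distribution, so $|\partial_a f_i|\le\lambda$. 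The $b$-derivative follows by symmetry. For the single-input NOT gate, only one binomial appears and the same arguments apply verbatim.
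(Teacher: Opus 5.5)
Your proposal is correct and follows the same route as the paper: you expand the binomial expectation to get the degree and range claims, and you combine Lipschitzness of $h_i$ with the binomial variance bound to get the $\lambda/\sqrt p$ error. Your explicit forward-difference formula for $\partial_a f_i$, with each difference bounded by $\lambda/p$ and weights summing to one, fills in the step the paper states only as ``by differentiation and the Lipschitzness of $h_i$.''
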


\begin{proof}
The range bound follows because $f_i$ averages values in $[0,1]$. The degree bound follows by expanding the binomial probabilities. Lipschitzness of $h_i$ together with the binomial variance bound give
\begin{align}
    |f_i(a,b)-h_i(a,b)|
    \le\lambda\ab(\E|\xi_a/p-a|+\E|\xi_b/p-b|)
    \le\lambda\ab(\sqrt{\frac{a(1-a)}p}+\sqrt{\frac{b(1-b)}p})
    \le\frac{\lambda}{\sqrt p}.
\end{align}
The derivative bounds follow by differentiation and the Lipschitznes of $h_i$.
\end{proof}

In what follows, we fix an integer $p\ge(4\lambda/\tau)^2$. By \Cref{cl:bernstein approximation}, the approximation error is at most $\tau/4$. Because every input appears at most three times, it follows that the Euclidean operator norm of $f$ is at most $\sqrt6\lambda\le3\lambda$, giving a Lipschitz bound $L=3\lambda$ for $f$ in the normalized norm as well.

We now set $\delta=\tau/(4d)$. If $\norm{f(\vx)-\vx}_2 \le \delta$, each coordinate residual is at most $\sqrt d\,\delta \le \tau/4$. Accounting for the approximation error, this satisfies every original gate at tolerance $\tau$. Thus, finding such a point is \PPAD-hard.

\paragraph{Amplifying the support bound.}
We next apply the amplification argument from \Cref{sec:hardness-fixedsets}. In particular, if $K \geq 2$ is the support, we set
\begin{align}\label{eq:multilinear amplification parameters}
    r \defeq (1+\ceil{\log_2 K})\ceil{128(L+1)^2/\delta^2} \text{ and } b \defeq \ceil{\max\{2,2\lambda+1,2^{14}(L+1)^2/\delta^3\}},
\end{align}
and define $g:[0,1]^{dr}\to[0,1]^{dr}$ by
\begin{align}
    g(\vx^{(1)},\dots,\vx^{(r)})
    = \ab((1-b^{-\ell})\vx^{(\ell)}+b^{-\ell}f(\vx^{(\ell)}))_{\ell=1}^r.
\end{align}
Moreover, $\zeta \defeq \delta/(16rb^r)$. The following bound follows directly from the argument in~\Cref{sec:hardness-fixedsets}.

\begin{claim}[Extracting a fixed point]\label{cl:multilinear amplification}
    If $\mu$ has support size at most $K$ and satisfies $\E[Q(\vx)-Q(g(\vx))]\le\zeta$ for every convex $1$-Lipschitz $Q$, then some $\vx\in\supp\mu$ and $\ell\in[r]$ satisfy $\norm{f(\vx^{(\ell)})-\vx^{(\ell)}}_2\le\delta$.
\end{claim}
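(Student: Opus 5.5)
The plan is to reduce the claim to the quantitative amplification argument of \Cref{sec:hardness-fixedsets}, using \Cref{th:fixed set equivalence} to move from the convex EFP hypothesis to a fixed set.

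\textbf{Step 1: pass to a fixed set.} By the converse direction of \Cref{th:fixed set equivalence}, the support $S=\supp\mu=\{\vx_1,\dots,\vx_k\}$, with $k\le K$, is a $\zeta$-fixed set of $g$. Concretely, there is a transition matrix $P$ with stationary distribution $\mu$ such that
\[
\E_{i\sim\mu}\norm*{\vx_i-\E_{j\sim P(i)}g(\vx_j)}_2\le\zeta .
\]
This is exactly the starting point of \Cref{sec:hardness-fixedsets}, with $\eps=\zeta$ and $\alpha_\ell=b^{-\ell}$.

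\textbf{Step 2: rerun the argument of \Cref{sec:hardness-fixedsets} verbatim.} I would check that it uses only three properties of $f$: $f$ maps $[0,1]^d$ into $[0,1]^d$, so $\norm{\Delta_j}_2\le1$; $f$ is $L$-Lipschitz with $L\ge1$ in the normalized norm (here $L=3\lambda\ge1$); and $0<\delta\le1$. Granting these, the argument proceeds as follows.
\begin{itemize}
\item Choose $t_\ell=\delta b^{-\ell}/64$, so $t_\ell\le1/2$, and use the entropy monotonicity lemma to pick a layer $\ell$ with $h(t_\ell)-h(t_{\ell-1})\le\log k/r$.
\item Bound $V(t)$ and the averaged drift $\gamma$ exactly as in that section.
\item Conclude from \eqref{eq:amplification residual} that $\E_{i\sim\mu}\norm{\Delta_i}_2<\delta/2$.
\end{itemize}
The parameters $r$ and $b$ in \eqref{eq:multilinear amplification parameters} match those of \Cref{sec:hardness-fixedsets}. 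The only difference is that $b$ is additionally at least $2\lambda+1$, and enlarging $b$ only weakens the $128/(\delta b)$ term. Likewise $\zeta=\delta/(16rb^r)$ matches the choice of $\eps$ there, so the three terms under the square root obey the same bounds as before.

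\textbf{Step 3: extract a point.} Since an average of nonnegative residuals over $\mu$ is below $\delta/2$, some $\vx_i\in\supp\mu$ satisfies $\norm{f(\vx_i^{(\ell)})-\vx_i^{(\ell)}}_2<\delta/2\le\delta$.

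\textbf{Main obstacle.} I expect the only care needed is verifying the hypotheses Step 2 relies on.
\begin{itemize}
\item The Lipschitz constant of the Bernstein-approximated $f$ must be $L\ge1$ in the normalized norm, which holds with $L=3\lambda$.
\item The blockwise conversion $\eta=\sqrt r\,\zeta$ must be valid for the per-layer error vectors $\ve_j$; the normalized-norm inequality $\norm{\vv^{(\ell)}}_2\le\sqrt r\,\norm{\vv}_2$ gives this.
\item The hypothesis $t\le1/2$ must hold, which follows from $\delta\le1$ and $b\ge2$.
\end{itemize}
With these checked, the claim follows directly.
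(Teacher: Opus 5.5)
Your proposal is correct and follows essentially the same route as the paper, which converts the $\zeta$-convex EFP into a $\zeta$-fixed set of $g$ with stationary distribution $\mu$ via \Cref{th:fixed set equivalence} and then reruns the quantitative multi-scale argument of \Cref{sec:hardness-fixedsets} with $\eps=\zeta$. Your checks are exactly the hypotheses that argument needs: the range of $f$, $L=3\lambda\ge1$, $\delta\le1$, $t_\ell\le1/2$, and the fact that the extra condition $b\ge2\lambda+1$ only shrinks the $b$-dependent terms because $\zeta$ is itself defined in terms of $b$.
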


\paragraph{Encoding the quadratic penalty into the constraints.} Let $T: [0, 1] \ni x \mapsto \min\{2x,2(1-x)\}$ and
\begin{align}
    p_m(x)=x-\sum_{j=1}^m4^{-j}T^{(j)}(x),
\end{align}
where $T^{(j)}$ denotes the $j$th composition of $T$ with itself. The function $p_m$ linearly interpolates $x^2$ on the mesh $2^{-m}\Z\cap[0,1]$, as illustrated in~\Cref{fig:quadratic-interpolation}. On a mesh interval $[a,b]$, its error is $(x-a)(b-x)$, so
\begin{align}\label{eq:multilinear square}
    x^2 \le p_m(x) \le x^2+4^{-(m+1)}.
\end{align}

\begin{figure}[htbp]
    \centering
    \input{tikz/quadratic-interpolation.tex}
    \caption{The piecewise-linear interpolants $p_1$ and $p_2$ relative to $x^2$. Each successive refinement doubles the number of pieces and divides the maximum error by four.}
    \label{fig:quadratic-interpolation}
\end{figure}

Although this interpolant has exponentially many pieces, it will suffice to consider the following compact polytope:
\begin{align}\label{eq:multilinear lift}
    \begin{gathered}
        0\le x,t,w_1,\dots,w_m\le1,\quad w_0=x,\\
        w_j\le2w_{j-1},\quad w_j\le2(1-w_{j-1})\quad \text{for } j\in[m],\quad
        t\ge x-\sum_{j=1}^m4^{-j}w_j.
    \end{gathered}
\end{align}
Crucially, in the game constructed below, $t$ enters the player's utility as a linear cost, so the player will be incentivized to choose a lift minimizing $t$. This choice recovers $p_m(x)$, as formalized below.

\begin{claim}
    \label{cl:multilinear lift}
    For every fixed $x\in[0,1]$, the minimum feasible $t$ in \eqref{eq:multilinear lift} is $p_m(x)$, attained when $w_j=T^{(j)}(x)$.
\end{claim}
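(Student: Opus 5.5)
We prove the two directions separately: feasibility of the choice $w_j=T^{(j)}(x)$, $t=p_m(x)$, and a lower bound $t\ge p_m(x)$ for every feasible point.

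\emph{Attainment.} Set $w_0=x$ and $w_j=T^{(j)}(x)$. Since $T$ maps $[0,1]$ into $[0,1]$, every $w_j\in[0,1]$. By definition $w_j=T(w_{j-1})=\min\{2w_{j-1},2(1-w_{j-1})\}$, so both constraints $w_j\le 2w_{j-1}$ and $w_j\le 2(1-w_{j-1})$ hold. Now set $t=x-\sum_{j}4^{-j}w_j=p_m(x)$. By \eqref{eq:multilinear square}, $p_m(x)\in[x^2,x^2+4^{-(m+1)}]\subseteq[0,1]$; this also uses $p_m\le x\le 1$, which is clear from the definition. Hence the point is feasible and achieves $t=p_m(x)$.

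\emph{Lower bound.} Since $t\ge x-\sum_j4^{-j}w_j$ and the coefficients $4^{-j}$ are positive, it suffices to show that every feasible chain satisfies $w_j\le T^{(j)}(x)$ for each $j$. Then $\sum_j 4^{-j}w_j\le\sum_j4^{-j}T^{(j)}(x)$, which gives $t\ge p_m(x)$.

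We would prove $w_j\le T^{(j)}(x)$ by induction on $j$. The case $j=0$ is an equality. For the inductive step, the constraints give $w_j\le T(w_{j-1})$. But $T$ is not monotone, so $w_{j-1}\le T^{(j-1)}(x)$ does not immediately yield $T(w_{j-1})\le T(T^{(j-1)}(x))$. This is the main obstacle.

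The fix is to strengthen the induction to use the slope of $T$. For any $u\le v$ in $[0,1]$ we have $T(v)\ge T(u)-2(v-u)$, because $T$ is $2$-Lipschitz. This inequality points the wrong way, so plain Lipschitzness is not enough. Instead, we would avoid pointwise monotonicity and argue on the objective directly, by backward induction from $j=m$.

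Define $\phi_j(w)=\max\{\sum_{i>j}4^{-(i-j)}w_i\}$ over feasible continuations starting from $w_j=w$. The claim is that $\phi_j(w)=\sum_{i\ge1}^{m-j}4^{-i}T^{(i)}(w)$, which we denote $\psi_j(w)$. For the inductive step we need
\begin{align}
\phi_{j-1}(w)=\max_{0\le w'\le T(w)}\tfrac14\bigl(w'+\phi_j(w')\bigr).
\end{align}
This follows if $w'\mapsto w'+\psi_j(w')$ is nondecreasing on $[0,1]$: then the maximum is attained at $w'=T(w)$. Monotonicity holds because $\psi_j$ is $\sum_i 4^{-i}2^i=\sum_i 2^{-i}<1$-Lipschitz, so its slope is greater than $-1$ and adding $w'$ gives a nonnegative slope. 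The bound $w'\ge0$ is harmless. Applying this at $j=0$ with $w=x$ gives $\max\sum_j4^{-j}w_j=\sum_j4^{-j}T^{(j)}(x)$, which completes the claim.
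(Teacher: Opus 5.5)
Your final argument is correct, but it takes a different route from the paper's.

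\textbf{The paper's route.} It argues by local improvement. Take a minimizing lift and the first index $j$ with $w_j<T(w_{j-1})$, and raise $w_j$ by $\Delta$. Then repair the tail via $w_s'=\min\{w_s,T(w_{s-1}')\}$. The $2$-Lipschitz property of $T$ bounds the loss at position $s$ by $2^{s-j}\Delta$. So $\sum_s 4^{-s}w_s$ increases by at least $4^{-j}\Delta\bigl(1-\sum_{h\ge1}2^{-h}\bigr)>0$, a contradiction.

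\textbf{Your route.} You run a backward Bellman recursion and compute the value function $\phi_j=\psi_j$ in closed form. Each stage then reduces to maximizing $w'+\psi_j(w')$ over $[0,T(w)]$. The same geometric series $\sum_i 4^{-i}2^i<1$ appears here as the Lipschitz constant of $\psi_j$, which makes $w'\mapsto w'+\psi_j(w')$ strictly increasing.

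\textbf{Comparison.} Your argument is direct rather than by contradiction, and it avoids the tail-repair construction. It also yields the optimal value from every intermediate state, and strict monotonicity gives uniqueness of the optimal lift, matching what the paper's proof shows. The paper's version is shorter and only ever manipulates one feasible point. In your stage recursion you silently use that the constraints on $w_j$ collapse to $0\le w_j\le T(w_{j-1})$ because $T\le 1$ on $[0,1]$. Say this explicitly.

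\textbf{Two clean-ups.} First, delete the pointwise detour entirely. The bound $w_j\le T^{(j)}(x)$ is not merely hard to prove by induction; it is false. For $m\ge2$, take $x=\tfrac12$, $w_1=\tfrac12$, $w_2=1$. This is feasible, while $T^{(2)}(\tfrac12)=T(1)=0$. Only the weighted sum is dominated by the $T$-chain, which is exactly why the value-function argument is needed.

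Second, the inclusion $[x^2,x^2+4^{-(m+1)}]\subseteq[0,1]$ fails for $x$ near $1$. The correct justification for $t=p_m(x)$ being admissible is $0\le x^2\le p_m(x)\le x\le 1$, which you do mention.
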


\begin{proof}
We show that minimizing $t$ forces $w_j=T(w_{j-1})$. For the sake of contradiction, we consider the first coordinate $w_j$ for which $w_j < T(w_{j-1})$. If $w_j$ is increased by $\Delta$, later coordinates can be reduced to retain feasibility, losing at most $2^{s-j}\Delta$ at position $s > j$. Indeed, after setting $w_j'=w_j+\Delta$, recursively choosing $w_s'=\min\{w_s,T(w_{s-1}')\}$ preserves feasibility, and the $2$-Lipschitz property of $T$ implies $0 \le w_s-w_s'\le2^{s-j}\Delta$ by induction. As a result, the gain in $\sum_s4^{-s}w_s$ is at least
\begin{align}
    4^{-j}\Delta\ab(1-\sum_{h=1}^{m-j}2^{-h})>0,
\end{align}
which is a contradiction. We conclude that every minimizing lift satisfies $w_j=T^{(j)}(x)$ and $t=p_m(x)$, as claimed.
\end{proof}

Let $\cS$ be the product of $dr$ copies of \eqref{eq:multilinear lift}. We write a lifted strategy as $\vs=(\vx,\vt,\vw_1,\dots,\vw_m)\in\cS$, where each of these vectors lies in $[0,1]^{dr}$. We denote by $\vx(\vs)=\vx$ its projection onto the original coordinates and $c(\vs) \defeq (dr)^{-1}\sum_a t_a$ its linear cost. The basic property we need is that
\begin{align}\label{eq:multilinear cost}
    c(\vs)\ge\norm{\vx(\vs)}_2^2,
    \qq{and every $\vx$ has a lift with}
    c(\vs)\le\norm{\vx}_2^2+4^{-(m+1)}.
\end{align}
We have therefore replaced the quadratic cost by a linear one using only $O(drm)$ variables and inequalities.

\paragraph{Evaluating the map with copies.} To implement each $f_i(a,b)$ as a multilinear function, we write each binomial variable as a sum of $p$ independent Bernoulli variables controlled by different players. In particular, if coordinate $i$ has inputs $a_i,b_i$, we set
\begin{align}
    \widetilde f_i(\vz^{(1)},\dots,\vz^{(2p)})
    =\E h_i\ab(\frac1p\sum_{j=1}^p \xi_j,\frac1p\sum_{j=p+1}^{2p}\xi_j),
\end{align}
where the $\xi_j$ are independent, with success probabilities $z^{(j)}_{a_i}$ for $j\le p$ and $z^{(j)}_{b_i}$ otherwise. This multilinear polynomial takes values in $[0,1]$ even when the copies disagree, and equals $f_i(\vx)$ when all copies equal $\vx$.

Moreover, we use one more copy for the retained coordinate in each layer of $g$, for a total of $q=2p+1$ copies. Writing $\vz^{(j,\ell)}$ for the $\ell$th $d$-dimensional block of copy $j$, we define
\begin{align}
    \widetilde g^{(\ell)}(\vz^{(1)},\dots,\vz^{(q)})
    =(1-b^{-\ell})\vz^{(q,\ell)}
    +b^{-\ell}\widetilde f(\vz^{(1,\ell)},\dots,\vz^{(2p,\ell)}).
\end{align}
This map takes values in $[0,1]^{dr}$ and equals $g(\vx)$ when all copies equal $\vx$. Since each output uses at most $q$ inputs and each input occurs at most three times in a fixed copy, Cauchy-Schwarz yields
\begin{align}\label{eq:multilinear copies}
    \norm{\widetilde g(\vz^{(1)},\dots,\vz^{(q)})-g(\vx)}_2
    \le 3q\sqrt{\sum_{j=1}^q\norm{\vz^{(j)}-\vx}_2^2}.
\end{align}

\paragraph{The final construction.} The game contains $q+2$ players, each of whom has strategy set $\cS$. For a strategy profile $(\vs_x,\vs_y,\vs_1,\dots,\vs_q)$, we write $\vx=\vx(\vs_x)$, $\vy=\vx(\vs_y)$, and $\vz^{(j)}=\vx(\vs_j)$ for the projected vectors. All players use the cost function $c$ defined above.

The roles of $\vx$ and $\vy$ are inherited from the quadratic game in \Cref{sec:hardness-fixedsets}: $\vx$ tries to imitate $\vy$, while $\vy$ tries to imitate the map evaluated at $\vx$. To make the utilities multilinear, we first replace the quadratic costs by the lifted linear cost $c$ and then evaluate the map using $\widetilde g(\vz^{(1)},\dots,\vz^{(q)})$, where each copy player is imitating $\vx$. Moreover, we reward $\vx$ for matching the copies, assigning a weight $M\ge1$ per copy. When $M \gg$ 1 and the imitation is sufficiently accurate, each copy will be forced to be close to $\vx$. At the same time, we retain the incentive for $\vx$ to imitate $\vy$.

In this context, we define the utilities
\begin{align}\label{eq:multilinear utilities}
    u_x(\vs_x,\vs_y,\vs_1,\dots,\vs_q)
    &=\frac{2}{dr(1+qM)}\ip{\vx,\vy+M\sum_j\vz^{(j)}}-c(\vs_x),\\
    u_y(\vs_x,\vs_y,\vs_1,\dots,\vs_q)
    &=\frac{2}{dr}\ip{\vy,\widetilde g(\vz^{(1)},\dots,\vz^{(q)})}-c(\vs_y),\\
    u_j(\vs_x,\vs_y,\vs_1,\dots,\vs_q)
    &=\frac{2}{dr}\ip{\vz^{(j)},\vx}-c(\vs_j)\quad \text{for } j\in[q].
\end{align}
(The factors $(dr)^{-1}$ are used per our normalized norm convention.) Each utility is multilinear and lies in $[-1,2]$. Shifting and scaling would place the utilities in $[0,1]$, changing the CE accuracy by only a constant factor, so we can work directly with the utilities above.

\begin{claim}
    \label{cl:multilinear extraction}
    Let $\mu$ be an $\eps$-correlated equilibrium of the constructed game and $\rho=\sqrt{\eps+4^{-(m+1)}}$. The $\vx$-marginal of $\mu$ is an $\eta$-convex EFP of $g$, where
    \begin{align}\label{eq:multilinear extraction}
        \eta=2(1+qM)\rho+3q\sqrt{M^{-1}+(2q+1)\rho}.
    \end{align}
\end{claim}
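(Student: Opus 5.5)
The plan is to extract, from each player's CE constraint, a one-step "imitation" inequality tested against convex $1$-Lipschitz functions. I then chain these inequalities $\vx\to\vy\to\widetilde g\to g(\vx)$. The copy players are handled by a convexity-based averaging argument, plus a separate second-moment bound showing that copies stay close to $\vx$.

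\emph{Step 1 (single-player deviations).} Fix a player with projected strategy $\vv$, lifted strategy $\vs$, and utility $\frac{2}{dr}\ip{\vv,\vw}-c(\vs)$, where $\vw$ is the linear "target" (for player $x$ we normalize by $1+qM$). Let $\bar\vw(\vs)=\E[\vw\mid\vs]\in[0,1]^{dr}$. Consider the deviation that maps $\vs$ to the cheap lift of $\bar\vw(\vs)$ given by \eqref{eq:multilinear cost}.
\begin{itemize}
\item By $c(\vs)\ge\norm{\vv}_2^2$, the current conditional utility is at most $\norm{\bar\vw}_2^2-\norm{\vv-\bar\vw}_2^2$.
\item The deviation achieves at least $\norm{\bar\vw}_2^2-4^{-(m+1)}$.
\end{itemize}
The $\eps$-CE condition therefore gives $\E\norm{\vv-\E[\vw\mid\vs]}_2^2\le\rho^2$, hence $\E\norm{\vv-\E[\vw\mid\vs]}_2\le\rho$. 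For any convex $1$-Lipschitz $Q$, Lipschitzness and then Jensen (as in \Cref{th:fixed set equivalence}) give
\begin{align}
\E Q(\vv)\le\E Q(\E[\vw\mid\vs])+\rho\le\E Q(\vw)+\rho.
\end{align}
I apply this to the three player types.
\begin{itemize}
\item Player $x$: target $\vw=(\vy+M\sum_j\vz^{(j)})/(1+qM)$.
\item Player $y$: target $\widetilde g(\vz^{(1)},\dots,\vz^{(q)})$.
\item Copy $j$: target $\vx$.
\end{itemize}

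\emph{Step 2 (chaining through the copies via convexity).} Convexity of $Q$ gives
\begin{align}
\E Q(\vx)\le\tfrac1{1+qM}\E Q(\vy)+\tfrac M{1+qM}\textstyle\sum_j\E Q(\vz^{(j)})+\rho.
\end{align}
The copy inequalities give $\E Q(\vz^{(j)})\le\E Q(\vx)+\rho$. Substituting and solving for $\E Q(\vx)$ (the coefficient of $\E Q(\vx)$ on the right is $qM/(1+qM)<1$) yields
\begin{align}
\E Q(\vx)\le\E Q(\vy)+(1+2qM)\rho\le\E Q(\vy)+2(1+qM)\rho-\rho.
\end{align}
Combining with player $y$'s inequality, $\E Q(\vy)\le\E Q(\widetilde g)+\rho$. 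Finally, by Lipschitzness and \eqref{eq:multilinear copies}, $\E Q(\widetilde g)\le\E Q(g(\vx))+3q\,\E\sqrt{\sum_j\norm{\vz^{(j)}-\vx}_2^2}$. By Jensen it thus suffices to bound $\sum_j\E\norm{\vz^{(j)}-\vx}_2^2$.

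\emph{Step 3 (copies are close to $\vx$; main obstacle).} This is the step where $M$ must be used, since the per-player inequalities only control first moments against conditional means. I work with inner products, writing $\ip{\cdot,\cdot}$ normalized by $(dr)^{-1}$.
\begin{itemize}
\item From the copy condition, $\E\norm{\vz^{(j)}}_2^2-\E\ip{\vz^{(j)},\vx}=\E\ip{\vz^{(j)},\vz^{(j)}-\E[\vx\mid\vs_j]}\le\rho$.
\item From player $x$'s condition,
\begin{align}
(1+qM)\E\norm{\vx}_2^2-\E\ip{\vx,\vy}-M\textstyle\sum_j\E\ip{\vx,\vz^{(j)}}\le(1+qM)\rho.
\end{align}
\end{itemize}
Expanding $\E\norm{\vz^{(j)}-\vx}_2^2$ and using the first bound gives
\begin{align}
\textstyle\sum_j\E\norm{\vz^{(j)}-\vx}_2^2\le\sum_j(\E\norm{\vx}_2^2-\E\ip{\vx,\vz^{(j)}})+q\rho.
\end{align}
The second bound, together with $\E\ip{\vx,\vy}-\E\norm{\vx}_2^2\le1$, gives $M\sum_j(\E\norm{\vx}_2^2-\E\ip{\vx,\vz^{(j)}})\le(1+qM)\rho+1$. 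Dividing by $M\ge1$ yields
\begin{align}
\textstyle\sum_j\E\norm{\vz^{(j)}-\vx}_2^2\le M^{-1}+(2q+1)\rho.
\end{align}

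\emph{Conclusion.} Plugging this into Step 2 gives $\E[Q(\vx)-Q(g(\vx))]\le\eta$, which is exactly the $\eta$-convex EFP condition of \Cref{def:cefp}.

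The delicate points I expect are these. First, conditioning on the lifted strategy $\vs$ rather than the projection $\vx$ should be harmless: the marginal on $\vx$ is what matters, and Jensen passes to coarser conditioning. Second, one must check that the deviation targets lie in $[0,1]^{dr}$, so that they have valid lifts. This holds because $\widetilde g$ and all the averaged targets take values in $[0,1]^{dr}$.
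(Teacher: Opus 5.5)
Your proposal is correct and follows the paper's proof essentially step for step. You use the same deviations to minimum-cost lifts of the conditional means, giving $\E\norm{\vv-\E[\vw\mid\vs]}_2^2\le\rho^2$, and the same convexity chaining $\vx\to\vy\to\widetilde g\to g(\vx)$ with slack $2(1+qM)\rho$. You also obtain the same second-moment bound $\sum_j\E\norm{\vz^{(j)}-\vx}_2^2\le M^{-1}+(2q+1)\rho$, from the $x$- and copy-obedience conditions via Cauchy--Schwarz and $\frac{1}{dr}\E\ip{\vx,\vy-\vx}\le 1$. The only difference is presentational: you bound the two inner-product pieces separately and then combine them, whereas the paper writes a single identity for $M\sum_j\E\norm{\vz^{(j)}-\vx}_2^2$ and bounds its three terms.
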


\begin{proof}
We define the deviations
\begin{align}
    \widehat{\vx}=\E\ab[\frac{\vy+M\sum_j\vz^{(j)}}{1+qM}\,\middle|\,\vs_x], \quad
    \widehat{\vz}^{(j)}=\E[\vx\mid\vs_j], \quad
    \widehat{\vy}=\E[\widetilde g(\vz^{(1)},\dots,\vz^{(q)})\mid\vs_y].
\end{align}
For each recommendation $\vs_x$, we let $\vs'_x$ be the minimum-cost lift of $\widehat{\vx}$. \Cref{cl:multilinear lift} gives
\begin{align}
    \vx(\vs'_x)=\widehat{\vx},\quad c(\vs'_x)\le\norm{\widehat{\vx}}_2^2+4^{-(m+1)}.
\end{align}
Writing $\vs_{-x}$ for the other players' recommendations, the conditional utility gain is
\begin{align}
    \E\ab[u_x(\vs'_x,\vs_{-x})-u_x(\vs_x,\vs_{-x})\,\middle|\,\vs_x] &=\frac{2}{dr}\ip{\widehat{\vx}-\vx,\widehat{\vx}}+c(\vs_x)-c(\vs'_x)\\
    &\ge 2\norm{\widehat{\vx}}_2^2-\frac{2}{dr}\ip{\vx,\widehat{\vx}}+c(\vs_x)-\norm{\widehat{\vx}}_2^2-4^{-(m+1)}\\
    &=\norm{\vx-\widehat{\vx}}_2^2+c(\vs_x)-\norm{\vx}_2^2-4^{-(m+1)}\\
    &\ge\norm{\vx-\widehat{\vx}}_2^2-4^{-(m+1)},
\end{align}
where the first equality uses the definition of $\widehat{\vx}$ and the last inequality uses $c(\vs_x)\ge\norm{\vx}_2^2$. Taking expectations and using the fact that $\mu$ is an $\epsilon$-correlated equilibrium,
\begin{align}
    \E\norm{\vx-\widehat{\vx}}_2^2 \leq 4^{-(m+1)}  +\eps = \rho^2.
\end{align}
Similarly, we obtain
\begin{align}\label{eq:multilinear obedience}
    \E\norm{\vx-\widehat{\vx}}_2^2\le\rho^2,\quad
    \E\norm{\vz^{(j)}-\widehat{\vz}^{(j)}}_2^2\le\rho^2,\quad
    \E\norm{\vy-\widehat{\vy}}_2^2\le\rho^2.
\end{align}
We now argue that the copies are close to $\vx$ in mean square. Indeed, we have
\begin{align}
    (1+qM)\E\ip{\vx,\widehat{\vx}} =\E\ip{\vx,\vy}+M\sum_j\E\ip{\vx,\vz^{(j)}}, \quad \E\ip{\vz^{(j)},\widehat{\vz}^{(j)}} = \E\ip{\vz^{(j)},\vx}.
\end{align}
Thus,
\begin{align}
    M\sum_j\E\norm{\vz^{(j)}-\vx}_2^2
    &= \frac1{dr}\E\ip{\vx,\vy-\vx}
    +\frac{1+qM}{dr}\E\ip{\vx,\vx-\widehat{\vx}} + \frac{M}{dr}\sum_j\E\ip{\vz^{(j)},\vz^{(j)}-\widehat{\vz}^{(j)}} \\
    &\le 1+(1+2qM)\rho.
\end{align}
To establish the inequality, we bound each of the three terms separately. First, since $\vx,\vy\in[0,1]^{dr}$, we have $\norm{\vx}_2\le1$ and $\norm{\vy-\vx}_2\le1$ in the normalized norm, so
\begin{align}
    \frac1{dr}\E\ip{\vx,\vy-\vx}
    \le\E\ab[\norm{\vx}_2\norm{\vy-\vx}_2]\le1.
\end{align}
Next, for the other two terms, Cauchy-Schwarz together with \eqref{eq:multilinear obedience} imply
\begin{align}
    \frac1{dr}\ab|\E\ip{\vx,\vx-\widehat{\vx}}|
    &\le\sqrt{\E\norm{\vx}_2^2}\sqrt{\E\norm{\vx-\widehat{\vx}}_2^2}\le\rho,\\
    \frac1{dr}\ab|\E\ip{\vz^{(j)},\vz^{(j)}-\widehat{\vz}^{(j)}}|
    &\le\sqrt{\E\norm{\vz^{(j)}}_2^2}\sqrt{\E\norm{\vz^{(j)}-\widehat{\vz}^{(j)}}_2^2}\le\rho.
\end{align}
It remains to establish the convex EFP condition. For a convex $1$-Lipschitz function $Q$, \eqref{eq:multilinear obedience} implies
\begin{align}
    \E Q(\vx)
    &\le\E Q(\widehat{\vx})+\E\norm{\vx-\widehat{\vx}}_2
    \le\E Q(\widehat{\vx})+\sqrt{\E\norm{\vx-\widehat{\vx}}_2^2}
    \le\E Q(\widehat{\vx})+\rho.
\end{align}
Next, applying Jensen's inequality,
\begin{align}
    Q(\widehat{\vx}) \le\E\ab[Q\ab(\frac{\vy+M\sum_j\vz^{(j)}}{1+qM})\,\middle|\,\vs_x] \le\E\ab[\frac{Q(\vy)+M\sum_j Q(\vz^{(j)})}{1+qM}\,\middle|\,\vs_x].
\end{align}
Taking expectations,
\begin{align}
    \label{eq:EQ-ineq}
    \E Q(\vx)\le\frac{\E Q(\vy)+M\sum_j\E Q(\vz^{(j)})}{1+qM}+\rho.
\end{align}
Similarly, we obtain
\begin{align}
    \E Q(\vz^{(j)})
    &\le\E Q(\widehat{\vz}^{(j)})+\rho
    = \E Q(\E[\vx\mid\vs_j])+\rho
    \le\E Q(\vx)+\rho,\\
    \E Q(\vy)
    &\le\E Q(\widehat{\vy})+\rho
    \le\E Q(\widetilde g(\vz^{(1)},\dots,\vz^{(q)}))+\rho.
\end{align}
Combining the first inequality with~\eqref{eq:EQ-ineq},
\begin{align}
    (1+qM) \E Q(\vx)
    \le \E Q(\vy) + qM\E Q(\vx)+ (1+2qM)\rho \iff \E Q(\vx) \leq \E Q(\vy) + (1+2qM)\rho.
\end{align}
Using the bound for $\E Q(\vy)$,
\begin{align}
    \E Q(\vx)\le\E Q(\widetilde g(\vz^{(1)},\dots,\vz^{(q)}))+2(1+qM)\rho.
\end{align}
Finally, Lipschitzness of $Q$, \eqref{eq:multilinear copies}, and Cauchy-Schwarz yield
\begin{align}
    \E\ab[Q(\widetilde g(\vz^{(1)},\dots,\vz^{(q)}))-Q(g(\vx))] \le3q\sqrt{\sum_j\E\norm{\vz^{(j)}-\vx}_2^2} \le3q\sqrt{M^{-1}+(2q+1)\rho}.
\end{align}
Combining these bounds gives $\E[Q(\vx)-Q(g(\vx))]\le\eta$.
\end{proof}
In other words, increasing $M$ enforces agreement between the copies, while decreasing $\eps$ makes imitation sufficiently accurate.

\paragraph{Choosing the parameters.} To conclude, we define the parameters as
\begin{align}\label{eq:multilinear precision}
    M=\ceil{64(3q+1)^2/\zeta^2},\quad
    \rho_0=\frac{\zeta}{32(q+1)M},\quad
    \eps=2^{-\ceil{\log_2(4/\rho_0^2)}},
\end{align}
and choose the least positive $m$ such that $4^{-(m+1)}\le\rho_0^2/4$. The choice of $\eps$ guarantees that $\rho\le\rho_0$. The first term in \eqref{eq:multilinear extraction} satisfies
\begin{align}
    2(1+qM)\rho
    \le\frac{\zeta}{16}\frac{1+qM}{(q+1)M}
    \le\frac{\zeta}{16},
\end{align}
where the last inequality uses $M\ge1$. For the second term, since $\zeta\le1$, we have
\begin{align}
    (2q+1)\rho
    \le\frac{(2q+1)\zeta}{32(q+1)M}
    \le\frac1M,
\end{align}
so
\begin{align}
    3q\sqrt{M^{-1}+(2q+1)\rho}
    \le3q\sqrt{\frac2M}
    \le\frac{\sqrt2\,\zeta}{8}\frac{3q}{3q+1}
    \le\frac{\sqrt2\,\zeta}{8}.
\end{align}
Thus, $\eta < \zeta$. By \Cref{cl:multilinear extraction,cl:multilinear amplification}, any such correlated equilibrium supported on at most $2^n$ points yields a $\delta$-fixed point of $f$. Since $\delta^{-1}=O(d)$, $r$ and $b$ are polynomial in $n$, and $\log(1/\zeta)=O(r\log b+\log(r/\delta))=\operatorname{poly}(n)$. As a result, $m$, $\log M$, and $\log(1/\eps)$ are polynomial in $n$. The polytope has $O(drm)$ variables and inequalities, and the utilities contain $O(dr)$ monomials. Thus, $N=\operatorname{poly}(n)$. For every fixed $C$, we have $N^C\le2^n$ for all sufficiently large $n$.
\end{proof}

\subsection{Polynomial precision under PCP-for-PPAD}

\begin{proof}[Proof of \Cref{th:multilinear hardness}(ii)]
We now operate under \Cref{ass:pcp-ppad}. We follow the same construction, with a minor change in the input circuit and the parameters.

We start from the circuit corresponding to \Cref{ass:pcp-ppad}, assumed to have $d$ nodes and encoding length $n \ge d$. We use again the Bernstein approximation with $p\ge(4\lambda/\tau)^2$, and choose a parameter $\delta$ such that $\delta^2\le\gamma\tau^2/64$. If $\norm{f(\vx)-\vx}_2\le\delta$, then
\begin{align}
    \frac1d\big|\{i:|f_i(\vx)-x_i|>\tau/4\}\big|
    \le\frac{16\delta^2}{\tau^2}\le\frac\gamma4.
\end{align}
Every other coordinate satisfies $|h_i(\vx)-x_i|\le\tau/2$ by the Bernstein error bound, so the corresponding gate is satisfied at tolerance $\tau$. The assignment therefore satisfies at least $(1-\gamma)d$ gates and can be decoded in polynomial time using \Cref{ass:pcp-ppad}. This shows that, under~\Cref{ass:pcp-ppad}, the problem is \PPAD-hard even at constant accuracy $\delta = \Theta(1)$.

The rest of the steps are the same. To begin with, we choose $r$ and $b$ per~\eqref{eq:multilinear amplification parameters}, and $\zeta=\delta/(16rb^r)$. Now $b$ is constant, $r=O(\log n)$, and $\zeta=n^{-O(1)}$. Moreover, the parameter choices in~\eqref{eq:multilinear precision} give $M=\Theta(\zeta^{-2})$, $\rho_0=\Theta(\zeta^3)$, $\eps=\Theta(\zeta^6)=n^{-O(1)}$, $m=O(\log n)$. As a result, there are $O(d \log^2 n)$ variables and inequalities and $O(d\log n)$ monomials in each utility.
\end{proof}

%% file: tikz/quadratic-interpolation.tex
\begingroup
\definecolor{interpOrange}{HTML}{C65E32}
\definecolor{interpTeal}{HTML}{237B87}
\begin{tikzpicture}[
    font=\small,
    >=Stealth,
    axis/.style={->,black,line width=0.5pt},
    guide/.style={black!18,line width=0.4pt},
    quadratic/.style={black!75,densely dashed,line width=0.6pt},
    coarse/.style={interpOrange,line width=0.8pt},
    fine/.style={interpTeal,line width=0.8pt}
]
    \begin{scope}[x=5.8cm,y=4.1cm]
        \node at (0.5,1.19) {(a) Interpolation};
        \foreach \v in {0.25,0.5,0.75,1} {
            \draw[guide] (0,\v) -- (1,\v);
        }
        \fill[interpOrange!9]
            (0,0) -- (0.5,0.25) -- (1,1)
            -- plot[domain=1:0,samples=81,variable=\t] (\t,{\t*\t})
            -- cycle;
        \draw[axis] (0,0) -- (1.08,0) node[right] {$x$};
        \draw[axis] (0,0) -- (0,1.09);
        \foreach \v/\lab in {0/0,0.25/{\tfrac14},0.5/{\tfrac12},0.75/{\tfrac34},1/1} {
            \draw[black] (\v,0) -- ++(0,-0.018)
                node[below] {$\lab$};
        }
        \foreach \v/\lab in {0.25/{\tfrac14},0.5/{\tfrac12},0.75/{\tfrac34},1/1} {
            \draw[black] (0,\v) -- ++(-0.012,0)
                node[left] {$\lab$};
        }
        \draw[coarse] (0,0) -- (0.5,0.25) -- (1,1);
        \draw[fine] (0,0) -- (0.25,0.0625) -- (0.5,0.25)
            -- (0.75,0.5625) -- (1,1);
        \draw[quadratic] plot[domain=0:1,samples=81,variable=\t]
            (\t,{\t*\t});
        \foreach \v in {0,0.25,0.5,0.75,1} {
            \fill[interpTeal] (\v,{\v*\v}) circle[radius=1.6pt];
        }
    \end{scope}

    \begin{scope}[xshift=8cm,x=5.8cm,y=65.6cm]
        \node at (0.5,0.074375) {(b) Error $p_m(x)-x^2$};
        \foreach \v/\lab in {0.015625/{\tfrac1{64}},0.0625/{\tfrac1{16}}} {
            \draw[guide] (0,\v) -- (1,\v);
            \draw[black] (0,\v) -- ++(-0.012,0)
                node[left] {$\lab$};
        }
        \draw[axis] (0,0) -- (1.08,0) node[right] {$x$};
        \draw[axis] (0,0) -- (0,0.068125);
        \foreach \v/\lab in {0/0,0.25/{\tfrac14},0.5/{\tfrac12},0.75/{\tfrac34},1/1} {
            \draw[black] (\v,0) -- ++(0,-0.001125)
                node[below] {$\lab$};
        }
        \foreach \j in {0,1} {
            \pgfmathsetmacro{\a}{\j/2}
            \pgfmathsetmacro{\b}{(\j+1)/2}
            \draw[coarse] plot[domain=\a:\b,samples=41,variable=\t]
                (\t,{(\t-\a)*(\b-\t)});
        }
        \foreach \j in {0,1,2,3} {
            \pgfmathsetmacro{\a}{\j/4}
            \pgfmathsetmacro{\b}{(\j+1)/4}
            \fill[interpTeal!10] (\a,0)
                -- plot[domain=\a:\b,samples=31,variable=\t]
                    (\t,{(\t-\a)*(\b-\t)})
                -- (\b,0) -- cycle;
            \draw[fine] plot[domain=\a:\b,samples=31,variable=\t]
                (\t,{(\t-\a)*(\b-\t)});
        }
    \end{scope}

    \draw[quadratic] (1.1,-0.85) -- (1.8,-0.85)
        node[right,black] {$x^2$};
    \draw[coarse] (4.3,-0.85) -- (5,-0.85)
        node[right,black] {$m=1$: two pieces};
    \draw[fine] (9,-0.85) -- (9.7,-0.85)
        node[right,black] {$m=2$: four pieces};
\end{tikzpicture}
\endgroup

%% file: text/ai-disclosure.tex
\section*{AI Disclosure}

The proofs of \Cref{th:fixed set hardness} and~\Cref{th:multilinear hardness} were AI-assisted. We have been examining the complexity of fixed sets for the best part of the last year. We were able to show that constant-support fixed sets are \PPAD-hard in low dimensions, and have been trying to strengthen this in higher dimensions through some gap amplification argument. With this as input, Astra GPT-6.0 provided the multi-scale argument that led to~\Cref{th:fixed set hardness}. We then significantly simplified and rewrote the proof, isolating \Cref{prop:non-quantitative} as an intuitive roadmap. For~\Cref{th:multilinear hardness}, Astra GPT-6.0 suggested the use of the piecewise linear interpolator, and helped us formalize some of the intermediate steps. All writing is due to the authors, and we take full responsibility for the correctness of the results.